\DeclareMathOperator*{\argmin}{arg\,min}
\newcommand{\bA}{\mathbf{A}}
\newcommand{\bx}{\bm{x}}
\newcommand{\by}{\bm{y}}
\newcommand{\bu}{\bm{u}}
\newcommand{\bv}{\bm{v}}
\def\trans{^{\rm T}}
\newcommand{\bI}{\mathbf{I}}
\newcommand{\bG}{\mathbf{G}}
\newcommand{\bB}{\mathbf{B}}
\newcommand{\bC}{\mathbf{C}}
\newcommand{\bD}{\mathbf{D}}
\newcommand{\bE}{\mathbf{E}}
\newcommand{\bJ}{\mathbf{J}}
\newcommand{\bM}{\mathbf{M}}
\newcommand{\bP}{\mathbf{P}}
\newcommand{\bX}{\mathbf{X}}
\newcommand{\bY}{\mathbf{Y}}
\newcommand{\bU}{\mathbf{U}}
\newcommand{\bV}{\mathbf{V}}
\newcommand{\bW}{\mathbf{W}}
\newcommand{\bZ}{\mathbf{Z}}
\newcommand{\bR}{\mathbf{R}}
\newcommand{\bQ}{\mathbf{Q}}
\newcommand{\bT}{\mathbf{T}}
\newcommand{\bS}{\mathbf{S}}
\newcommand{\bzeta}{\boldsymbol{\zeta}}
\newcommand{\bSigma}{\boldsymbol{\Sigma}}
\newcommand{\bPi}{\boldsymbol{\Pi}}
\newcommand{\bPhi}{\boldsymbol{\Phi}}
\newcommand{\bzero}{\mathbf{0}}
\def\wh{\widehat}
\def\wt{\widetilde}
\def\trans{^{\rm T}}
\newcommand{\norm}[1]{\Vert#1\Vert}
\newcommand{\Norm}[1]{\left\Vert#1\right\Vert}
\newcommand{\abs}[1]{\vert#1\vert}
\newcommand{\expectation}{\mathbb{E}}
\newcommand{\probability}{\mathbb{P}}
\newtheorem{remark}{Remark}
\newtheorem{assumption}{Assumption}
\newtheorem{theorem}{Theorem}
\title{Mixed Matrix Completion in Complex Survey Sampling under Heterogeneous Missingness}
\author[1]{Xiaojun Mao}
\author[2]{Hengfang Wang \thanks{
E-mail addresses: maoxj@sjtu.edu.cn, hengfang@fjnu.edu.cn, wangzl@xmu.edu.cn, syang24@ncsu.edu}}
\author[3]{Zhonglei Wang}
\author[4]{Shu Yang}
\affil[1]{School of Mathematical Sciences,
Ministry of Education Key Laboratory of Scientific and Engineering Computing, 
Shanghai Jiao Tong University}
\affil[2]{School of Mathematics and Statistics \& Fujian Provincial Key Laboratory of Statistics and Artificial Intelligence, Fujian Normal University}
\affil[3]{Wang Yanan Institute for Studies in Economics and School of Economics,
Xiamen University}
\affil[4]{Department of Statistics, North Carolina State University}
\date{}
\begin{document}

\maketitle

\begin{abstract}
  Modern surveys with large sample sizes and growing mixed-type questionnaires require robust and scalable
  analysis methods. In this work, we consider recovering a mixed dataframe matrix, obtained by complex survey sampling, with entries following different canonical exponential distributions and subject to heterogeneous missingness. To tackle this challenging task, we propose a two-stage procedure: in the first stage, we model the entry-wise missing mechanism by logistic regression, and in the second stage, we complete the target parameter matrix by maximizing a weighted log-likelihood with a low-rank constraint. We propose a fast and scalable estimation algorithm that achieves sublinear convergence, and the upper bound for the estimation error of the proposed method is rigorously derived.
  Experimental results support our theoretical claims, 
 and the proposed estimator shows its merits compared to other existing methods. 
The proposed method is applied to analyze the National Health and Nutrition Examination Survey data.
\end{abstract}

\section{Introduction}

Survey sampling is a touchstone for social science \citep{elliott2017inference,haziza2017construction}. Modern technologies have accelerated the sampling speed with more mixed-type questionnaires, e.g., National  Health and Nutrition Examination Survey  and 
web-based surveys \citep{rivers2007sampling}.  However, in practice, non-response is ubiquitous in survey data with arbitrary missingness patterns. If the missing mechanism is informative, ignoring the missing values leads to biased estimation.  More importantly, a large survey with many mixed-type questionnaires also requires more scalable and robust methods.

We envision the survey dataframe as a matrix, where rows correspond to subjects and columns to responses to different questions whose entries suffer from missingness.  Imputation methods are commonly used to address such
missingness. Generally, imputation methods fall into two categories: row-wise imputation and column-wise imputation. Multiple imputation \citep{rubin1976inference} is a popular row-wise imputation method and it leverages a posterior predictive distribution given the observations to impute the missing ones for each row. Multiple imputation requires positing a joint distribution, which is stringent in practice, and it is computationally heavy or even infeasible with a growing number of questions. On the other hand, hot-deck imputation \citep{chen2000nearest,kim2004fractional} is a typical column-wise imputation method and imputes missing values by observations in the same column with a predetermined distance metric. However, it is unrealistic to use the same metric to impute all the missing values, especially for large surveys with many mixed-type questions. Furthermore, due to the arbitrarily missing pattern, some donor pools may be limited.

As their names manifest, row-wise and column-wise imputation methods utilize only partial information, so do not harvest the full information of the data matrix. In contrast, matrix completion methods \citep{candes2009exact,keshavan2009matrix,mazumder2010spectral,koltchinskii2011nuclear,negahban2012restricted,fan2019177} leverage matrix structures, such as low-rankness, to impute missing values simultaneously. Additionally, low-rankness is naturally present for survey data with block-wise questionnaires and common sampling designs such as stratified and cluster sampling. 
 Mixed-type responses in large surveys require matrix completion for mixed data frames \citep{kiers1991simple,pages2014multiple,udell2016generalized}. In this vein, \cite{robin2020main} studied the main effects and interactions with low-rankness and sparse matrix completion. \cite{alaya2019collective} studied collective matrix completion whose entries come from exponential family distributions. Some survey variables are fully observed and can serve as side information to improve estimation efficiency. Inductive matrix completion \citep{xu2013speedup,jain2013provable,wang2023multitask} modeled side information by the matrix factorization method. \cite{fithian2018flexible} leveraged row-wise and column-wise side information for reduced-rank modeling of matrix-valued data. \cite{chiang2018using} studied matrix completion with missing and corrupted side information. \cite{mao2019matrix,mao2019matrixarxiv} proposed matrix completion with covariates using the column-space decomposition method. Meanwhile, most of the matrix completion literature treats the missing scheme as uniform missing, that is, missing completely at random. In the vein of missing data literature, heterogeneous missingness is more realistic and has been well studied \citep{little2019statistical,kim2021statistical},  and the missing mechanism is usually modeled by logistic regression on the covariates. Without covariate information, it is difficult to model heterogeneous missingness for matrix completion \citep{mao2018matrix}.

This paper considers recovering a mixed survey data matrix with entries following canonical exponential distributions given auxiliary information and subject to heterogeneous missingness. To tackle this challenging problem, we propose a two-stage procedure: in the first stage, we employ logistic regression to model the entry-wise missing mechanism with auxiliary variables, and in the second stage, we leverage an inverse probability weighted pseudolikelihood with a low-rankness constraint for matrix recovery, where the estimated response rates serve as surrogate probabilities.
 We establish the statistical guarantee of the proposed method and present the upper bound of the estimation error. Computationally, we adopt a fast iterative shrinkage-thresholding algorithm (FISTA) \citep{beck2009fast} for estimation and show that it enjoys a sublinear convergence rate. To support our theoretical analysis, a synthetic experiment is conducted and the proposed estimator is shown to have its merits compared to other competitors. 
 We apply the proposed method to recover mixed-type missing values in the 2015-2016 National Health and Nutrition Examination Survey data.

Notation. For a matrix $\bS=(s_{ij}) \in \mathbb{R}^{n_{1} \times n_{2}}$, its singular values are $\sigma_{1}(\bS),\ldots, \sigma_{r}(\bS)$ in descending order. We denote the Frobenius norm of $\bS$ by $\Norm{\bS}_{F} = ( \sum_{i=1}^{n_{1}}\sum_{j=1}^{n_{2}} s_{ij}^2  )^{1/2}$, the operator norm by $\norm{\bS} = \sigma_{\max}(\bS)=\sigma_{1}(\bS)$, the nuclear norm $\norm{\bS}_{\ast} = \sum_{i=1}^{r }\sigma_{i}(\bS)$ and the sup norm $\norm{\bS}_{\infty} = \max_{i,j} \{ s_{ij}\}$.  For a positive integer $n$, define $[\![n] \!] = \{1, \ldots, n\}$.

\section{Methodology}

\subsection{Problem Formulation}

We consider a finite population containing $N$ subjects represented as $\mathcal{F} = \{(\bx_i,\by_i):i \in [\![N] \!] \}$, where $\bx_{i} \in \mathbb{R}^{D}$ is a $D$-dimensional covariate, and $\by_i=(y_{i1},\ldots,y_{iL})\trans$ is the response of interest to $L$ questions.  The dimension of $\by_i$ is denoted by $L$. We consider a sampling design without replacement.  Denote $I_i$ to be the sampling indicator of the subject $i$,  where $I_i=1$ if the subject $i$ is sampled and $0$ otherwise. 
The first-order inclusion probability for the $i$-th subject is denoted by $\pi_i = \expectation(I_i)$, where the expectation corresponds to the sampling process.
Denote $\Pi=\{\pi_1,\ldots,\pi_N\}$ to be the known first-order inclusion probabilities.
For simplicity, we assume that the first $n$ subjects are included in the sample. The covariate matrix $\bX= (x_{ij})\in \mathbb{R}^{n\times D}$ is then formed, where $x_{ij}$ represents the $j$-th covariate for the $i$-th subject. This covariate matrix contains demographic information and typically has fixed $D$ columns. Furthermore, for a questionnaire with $L$ questions, we construct an $n\times L$ matrix denoted as $\bY=(y_{ij})\in\mathbb{R}^{n\times L}$. Each entry $y_{ij}$ in this matrix corresponds to the answer provided by the  subject $i$ to the $j$-th question. Due to nonresponses, the response probability of $\bY$ is often low. To this end, we use the corresponding missing indicator matrix $\bR=(r_{ij}) \in \mathbb{R}^{n\times L}$, where for any $(i,j)\in [\![ n ]\!] \times [\![ L ]\!]$, the value of $r_{ij}$ is 1 if $y_{ij}$ is observed (not missing), and 0 otherwise.

In a survey questionnaire, the questions are usually grouped into different categories based on the types of possible answers they can have.  For instance, questions that elicit responses like "yes" or "no," or "like" and "dislike" fall under the category of binary responses. Questions that require answers in the form of nonnegative integers, such as the number of household members or pets, belong to the category of nonnegative integer responses. Meanwhile, questions that prompt continuous values like household income or odometer readings are classified as continuous responses. Assume that there are a total of $S$ categories of questions in a specific questionnaire, and denote the number of questions within the $s$-th category as $m_{s}$. Naturally, the total number of questions in the survey is given by $L = \sum_{s=1}^{S}m_{s}$.
For the sake of clarity and with a slight abuse of notation, we arrange the responses for each subject $i$ in a concatenated manner as $\by_{i} = ((\by_{i}^{(1)})\trans, \ldots, (\by_{i}^{(S)})\trans )\trans$. Here, $\by_{i}^{(s)}\in \mathbb{R}^{m_{s}}$ represents the responses to the questions of the $i$-th subject in the $s$-th category, for $s \in [\![ S ]\!]$. Consequently, the full response matrix $\bY$ can be formulated as $\bY = [\bY^{(1)}, \ldots, \bY^{(S)}]$, with $\bY^{(s)} = (y_{ij}^{(s)}) \in \mathbb{R}^{n \times m_{s}}$ containing the responses to the questions in the $s$-th category.
Following a similar concatenation approach for $\bY$, we denote the missing indicator matrix as $\bR$ to account for missing responses in the data. For each $s\in [\![ S ]\!]$, we define $\bR = [\bR^{(1)}, \ldots, \bR^{(S)}]$, where $\bR^{ (s)} = (r_{ij}^{ (s)}) \in \mathbb{R}^{n\times m_{s}}$. Each entry $r_{ij}^{(s)}$ in $\bR^{(s)}$ indicates whether the corresponding response $y_{ij}^{(s)}$ in $\bY^{(s)}$ is observed (1) or missing (0).
Traditional matrix completion methods assume a specific distribution for the entire response matrix, limiting their ability to handle mixed-type responses. To overcome this limitation, we introduce an exponential family approach, which can flexibly deal with diverse types of responses \citep{alaya2019collective,robin2020main,wang2023multitask}. Specifically, we assume that all entries within the same response category come from the same generic exponential family. 
Mathematically, 
we consider the conditional density function within the $s$-th category as
introduce the exponential family to handle different types of responses,
\begin{align*}
 f^{(s)}(y_{ij}^{(s)}\mid z_{ij}^{\ast(s)})= h^{(s)}(y_{ij}^{(s)})\exp\left\{y_{ij}^{(s)}z_{ij}^{\ast(s)} -g^{(s)}(z_{ij}^{\ast(s)} )\right\},
\end{align*}
for $(s,i,j) \in [\![S]\!]\times  [\![n]\!] \times [\![m_{s}]\!]$, where $\{(g^{(s)}, h^{(s)}): s \in  [\![ S ]\!] \}$ contains $S$ doublets of functions and $\bZ^{\ast} = [\bZ^{\ast(1)}, \ldots, \bZ^{\ast(S)}]$ is the parameter matrix with $\bZ^{\ast (s)} = (z_{ij}^{\ast (s)}) \in \mathbb{R}^{n\times m_{s}}$. 
In the context of our investigation, we posit that $\bZ^{\ast}$ exhibits low-rank characteristics. For illustrative purposes, let us consider the example of the exponential family.  
Some commonly used distributions from the exponential family are listed in Table~\ref{tab:exponential}.

\begin{table}[!t]
   \renewcommand{\arraystretch}{1.3}
        \caption{Commonly used distributions from the exponential family}
    \label{tab:exponential}
 \centering
    \begin{tabular}{ccc}
    \hline
   \bfseries  Exponential family  &  $g(z)$ & $h(y)$     \\
    \hline
 \bfseries  Bernoulli & $\log\{1 + \exp(z)\}$ & 1\\
  \bfseries Poisson &  $\exp(z)$ & $1/(y!)$ \\
  \bfseries   Gaussian & $\sigma^{2}z^{2}/2$ &
   $ (2\pi \sigma^{2})^{-1/2}\exp{ \{ -y^{2}/(2\sigma^{2})  \}}$\\
  \bfseries  Exponential & $-\log(-z)$   &   1  \\
    \hline
    \end{tabular}
\end{table}

\begin{remark}
Our setup differs from existing ones in the following aspects. Traditional matrix completion problem mainly assumes independence among subjects, but they are usually correlated under complex sampling designs. Additionally, under survey sampling, there exists another parameter matrix $\bZ_p^\ast\in\mathbb{R}^{N\times L}$ on the population level, and the parameter matrix $\bZ^\ast$ consists of rows of $\bZ_p^\ast\in\mathbb{R}^{N\times L}$ with respect to the sample $i\in [\![n]\!]$. Furthermore, if the sampling design is informative, traditional matrix completion techniques, such as \citet{alaya2019collective}, will lead to biased recovery of $\bZ^\ast$ since the sampling information is overlooked; see \citet{pfeermann1996} for details.
\end{remark}
Our goal is to estimate the parameter matrix $\bZ^{\ast}$ based on available information. 
To this end, given the density functions $\{f^{(s)}: s \in [\![S]\!]\}$,  the negative quasi-log-likelihood function for $\bZ^{\dagger}$ is 
\begin{align}\label{original likelihood}
\ell_{0}\left(\bZ^{\dagger}\right) =& -\sum_{s=1}^{S} \sum_{(i,j)\in [\![n]\!] \times [\![m_{s}]\!]   }\frac{ r_{ij}^{(s)}}{\pi_{i}} \log\left\{f^{(s)}(y_{ij}^{(s)}|z_{ij}^{\dagger(s)})\right\}\notag\\
=& \sum_{s=1}^{S} \sum_{(i,j)\in [\![n]\!] \times [\![m_{s}]\!]   }\frac{r_{ij}^{(s)}}{\pi_{i}} \left\{-y_{ij}^{(s)}z_{ij}^{\dagger(s)}+g^{(s)}\left(z_{ij}^{\dagger(s)}\right)\right\}
+ \sum_{s=1}^{S} \sum_{(i,j)\in [\![n]\!] \times [\![m_{s}]\!] } \frac{r_{ij}^{(s)}}{\pi_{i}} \log\left\{h^{(s)}(y_{ij}^{(s)}) \right\},
\end{align}
where $\bZ^{\dagger} = [\bZ^{\dagger(1)}, \ldots, \bZ^{\dagger(S)}]$ and $\bZ^{\dagger(s)} = (z_{ij}^{\dagger(s)}) \in \mathbb{R}^{n \times m_{s}}$ for $s \in [\![ S ]\!]$. In (\ref{original likelihood}), we have incorporated the inclusion probabilities to obtain the Horvitz-Thompson estimator \citep{HTestimator1952}. The Horvitz-Thompson estimator is commonly used to estimate population parameters under complex survey sampling. It adjusts for the unequal probabilities of selection by weighting each unit's study variable by the inverse of its probability of being selected, ensuring that the sample more accurately represents the finite population. Because the  second term  of \eqref{original likelihood} is irrelevant to the argument $\bZ^{\dagger}$, we concentrate on
\begin{align}
\ell\left(\bZ^{\dagger}\right) = \sum_{s=1}^{S} \sum_{(i,j)\in [\![n]\!] \times [\![m_{s}]\!]   }   \frac{r_{ij}^{(s)}}{\pi_{i}} \left\{-y_{ij}^{(s)}z_{ij}^{\dagger(s)}+g^{(s)}\left(z_{ij}^{\dagger(s)}\right)\right\}.\notag
\end{align}
Further, let $\probability(r_{ij}^{(s)}=1)=p_{ij}^{(s)}$ and  $\bP^{(s)}=(p_{ij}^{(s)})$ for $(s,i,j) \in [\![S]\!]\times  [\![n]\!] \times [\![m_{s}]\!]$. Due to complex survey sampling
and missingness, we consider the weighted loss function
\begin{align}
\ell_{w}\left(\bZ^{\dagger}\right) 
=\sum_{i=1 }^{n} \frac{1}{\pi_{i}} \left[ \sum_{s=1}^{S} \sum_{j=1}^{  m_{s}   } \frac{r_{ij}^{(s)}}{p_{ij}^{(s)}}\left\{-y_{ij}^{(s)}z_{ij}^{\dagger(s)}+g^{( s)}\left(z_{ij}^{\dagger(s)}\right)\right\}\right].\notag
\end{align}
Different from existing works \citep{alaya2019collective, robin2020main}, in this paper, we target on estimating the population-level parameters but not the sample-level ones. For example, \cite{fang2018max} considered an unweighted loss function with  max-norm  penalization  for robust estimation.  Unfortunately, under informative sampling, such unweighted loss functions may lead to biased estimation; see \cite{pfeermann1996} and \cite{mao2019matrixarxiv} for more details.

For concreteness, we focus on stratified sampling. In this scenario, we assume the existence of $H$ strata, and a total of $n$ subjects are sampled, with $n_{h}$ subjects selected from the $h$-th stratum for $h\in [\![H]\!]$. 
In addition, we assume logistic models within each stratum, 
\begin{align}\label{pij}
 p_{ij}^{(s)} = \frac{ \exp \{ (1, \bx_{i}\trans) \bzeta_{j,h}^{(s)} \}      }{ 1 +  \exp \{(1, \bx_{i}\trans) \bzeta_{j,h}^{(s)} \} },
\end{align}
for $ \sum_{l=1}^{h-1}n_{l} + 1 \leq i \leq \sum_{l=1}^{h}n_{l}$, where $\bzeta_{j,h}^{(s)} \in \mathbb{R}^{D+1}$ are the coefficients for 
the $j$-th question within the $s$-th category and the $h$-th stratum. We can fit the logistic model within each stratum and each question to estimate $\bzeta_{j,h}^{(s)}$ and further obtain $\wh{p}_{ij}^{(s)}$ in a plug-in fashion of \eqref{pij}. Therefore, the surrogate-weighted loss is 
\begin{align*}
\wh{\ell}_{w}\left(\bZ^{\dagger}\right) 
=\sum_{i=1 }^{n} \frac{1}{\pi_{i}} \left[ \sum_{s=1}^{S} \sum_{j=1}^{  m_{s}   } \frac{r_{ij}^{(s)}}{\wh{p}_{ij}^{(s)}}\left\{-y_{ij}^{(s)}z_{ij}^{\dagger(s)}+g^{( s)}\left(z_{ij}^{\dagger(s)}\right)\right\}\right].
\end{align*}
 The low-rankness of $\bZ^{\ast}$ and $\bX$ naturally allows us to formulate the estimation procedure as 
\begin{align}\label{main object}
\widehat{\bZ} =& \argmin_{\bZ^{\dagger} \in \mathbb{R}^{n \times L}} \frac{1}{NL}\wh{\ell}_{w}\left(\bZ^{\dagger}\right)  + \tau\Norm{[\bX,\bZ^{\dagger}]}_{\ast} =: \argmin_{\bZ^{\dagger} \in \mathbb{R}^{n \times L}}  \mathcal{L}_{\tau} (\bZ^{\dagger}),
\end{align}
where $\tau > 0$ is a tuning parameter. 
The concatenation $[\bX, \bZ^{\dagger}]$ helps us capture the potential nonlinear relationship between $\bZ$ and $\bX$. Similar ideas can be found in the literature of multi-task learning and multi-view data learning; see \cite{goldberg2010transduction}, \cite{zhang2012inductive}, \cite{chen2018multi} and \cite{ashraphijuo2020fundamental} for more details.
Even though we focus on stratified sampling,  the derivation of our proof can be  extended to general sampling designs. Further, in our setting, we consider different  missing mechanisms across strata. In a nutshell, our methodology is summarized in Figure~\ref{TIKZ}.

\tikzset{%
  thick arrow/.style={
     -{Triangle[angle=90:1pt 1]},
     line width=0.3cm, 
     draw=blue!20 
  },
  arrow label/.style={
    text=black,
    font=\sf,
    align=center
  },
  set mark/.style={
    insert path={
      node [sloped, midway, arrow label, node contents=#1]
    }
  }
}

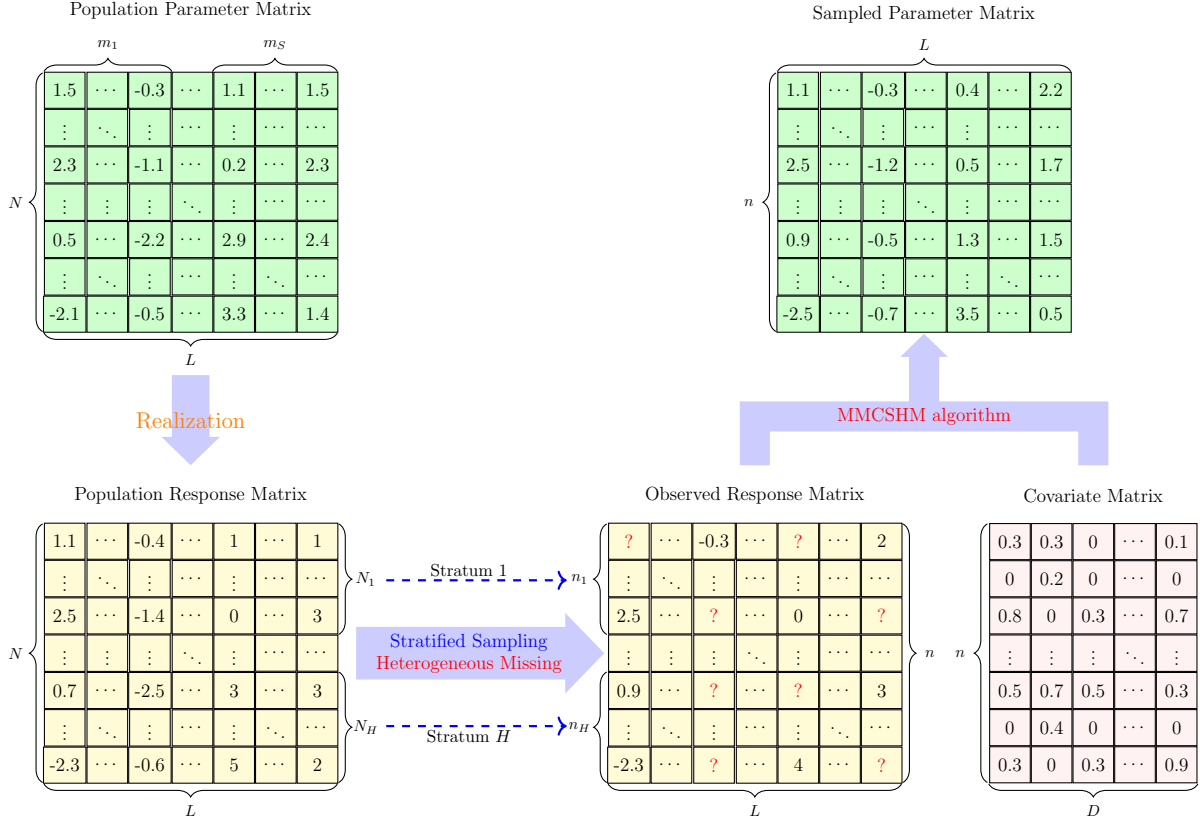
\begin{figure}[!t]
  \centering
\begin{tikzpicture}\tikzstyle{every node}=[scale=0.6]

  \matrix  [nodes = draw, nodes={fill=green!20, minimum width=9mm,
  minimum height = 8mm}] (m)
  {
    \node{1.5}; & \node{$\cdots$}; &  \node {-0.3};  &  \node{$\cdots$}; &  \node {1.1}; &  \node{$\cdots$};&  \node{1.5};  \\
    \node{$\vdots$}; &      \node{$\ddots$}; &   \node {$\vdots$}; &  \node{$\cdots$};  &       \node {$\vdots$};&  \node{$\cdots$};&  \node{$\cdots$}; \\
    \node{2.3}; &      \node{$\cdots$}; &   \node {-1.1}; &  \node{$\cdots$};  &       \node {0.2}; &  \node{$\cdots$};&  \node{2.3}; \\
    \node{$\vdots$}; &  \node{$\vdots$}; &  \node{$\vdots$}; &  \node{$\ddots$};  &      \node{$\vdots$}; &  \node{$\cdots$};&  \node{$\cdots$}; \\
    \node{0.5}; &      \node{$\cdots$}; &   \node {-2.2}; &  \node{$\cdots$};  &       \node {2.9};&  \node{$\cdots$};&  \node{$2.4$}; \\
    \node{$\vdots$}; & \node{$\ddots$}; &  \node{$\vdots$};  &  \node{$\cdots$}; &  \node {$\vdots$};&  \node{$\ddots$};&  \node{$\cdots$};   \\
    \node{-2.1}; &      \node{$\cdots$}; &   \node {-0.5}; &  \node{$\cdots$};  &       \node {3.3};&  \node{$\cdots$};&  \node{$1.4$}; \\
  };

    \draw (0, 2.55) node {Population Parameter Matrix};
    \draw [decorate,decoration={brace,amplitude=5pt}, xshift=-0.5pt, yshift=0pt]
    (-1.95,-1.72) -- (-1.95, 1.72) node [black,midway,xshift=-0.6cm] 
    {\footnotesize $N$};
        \draw [decorate,decoration={brace,amplitude=5pt,mirror}, xshift=0pt, yshift=-0.5pt]
    (-1.95,-1.72) -- (1.95, -1.72) node [black,midway,yshift=-0.6cm] 
    {\footnotesize $L$};
    \draw [decorate,decoration={brace,amplitude=5pt}, xshift=0pt, yshift=  0.5pt]
    (-1.95, 1.72) -- (-0.25, 1.72) node [black,midway,yshift=0.6cm] 
    {\footnotesize $m_{1}$};
    \draw [decorate,decoration={brace,amplitude=5pt}, xshift=0pt, yshift=  0.5pt]
    (0.29, 1.72) -- (1.95, 1.72) node [black,midway,yshift=0.6cm] 
    {\footnotesize $m_{S}$};

    \begin{scope}[xshift= 9.75cm]
 \matrix  [nodes = draw, nodes={fill=green!20,minimum width=9mm,
 minimum height = 8mm}] 
 {
  \node{1.1}; & \node{$\cdots$}; &  \node {-0.3};  &  \node{$\cdots$}; &  \node {0.4}; &  \node{$\cdots$};&  \node{2.2};  \\
  \node{$\vdots$}; &      \node{$\ddots$}; &   \node {$\vdots$}; &  \node{$\cdots$};  &       \node {$\vdots$};&  \node{$\cdots$};&  \node{$\cdots$}; \\
  \node{2.5}; &      \node{$\cdots$}; &   \node {-1.2}; &  \node{$\cdots$};  &       \node {0.5}; &  \node{$\cdots$};&  \node{1.7}; \\
  \node{$\vdots$}; &  \node{$\vdots$}; &  \node{$\vdots$}; &  \node{$\ddots$};  &      \node{$\vdots$}; &  \node{$\cdots$};&  \node{$\cdots$}; \\
  \node{0.9}; &      \node{$\cdots$}; &   \node {-0.5}; &  \node{$\cdots$};  &       \node {1.3};&  \node{$\cdots$};&  \node{$1.5$}; \\
  \node{$\vdots$}; & \node{$\ddots$}; &  \node{$\vdots$};  &  \node{$\cdots$}; &  \node {$\vdots$};&  \node{$\ddots$};&  \node{$\cdots$};   \\
  \node{-2.5}; &      \node{$\cdots$}; &   \node {-0.7}; &  \node{$\cdots$};  &       \node {3.5};&  \node{$\cdots$};&  \node{$0.5$}; \\
};

    \draw [decorate,decoration={brace,amplitude=5pt}, xshift=-0.5pt, yshift=0pt]
    (-1.95,-1.72) -- (-1.95, 1.72) node [black,midway,xshift=-0.6cm] 
    {\footnotesize $n$};
        \draw [decorate,decoration={brace,amplitude=5pt}, xshift=0pt, yshift=  0.5pt]
    (-1.95, 1.72) -- (1.95, 1.72) node [black,midway,yshift=0.6cm] 
    {\footnotesize $L$};
    \draw (0,2.5) node { Sampled Parameter Matrix};
  \end{scope}

\draw [thick arrow, line width=5mm] (0, -2.3)  -> (0, -3.5);
\node at  (0,-2.9) {\textcolor{orange}{\large{Realization}}};

  \begin{scope}[ yshift = -6cm]


  \matrix  [nodes = draw, nodes={fill=yellow!20, minimum width=9mm,
  minimum height = 8mm}] at (0, 0)
  {
    \node{1.1}; & \node{$\cdots$}; &  \node {-0.4};  &  \node{$\cdots$}; &  \node {1}; &  \node{$\cdots$};&  \node{1};  \\
    \node{$\vdots$}; &      \node{$\ddots$}; &   \node {$\vdots$}; &  \node{$\cdots$};  &       \node {$\vdots$};&  \node{$\cdots$};&  \node{$\cdots$}; \\
    \node{2.5}; &      \node{$\cdots$}; &   \node {-1.4}; &  \node{$\cdots$};  &       \node {0}; &  \node{$\cdots$};&  \node{3}; \\
    \node{$\vdots$}; &  \node{$\vdots$}; &  \node{$\vdots$}; &  \node{$\ddots$};  &      \node{$\vdots$}; &  \node{$\cdots$};&  \node{$\cdots$}; \\
    \node{0.7}; &      \node{$\cdots$}; &   \node {-2.5}; &  \node{$\cdots$};  &       \node {3};&  \node{$\cdots$};&  \node{$3$}; \\
    \node{$\vdots$}; & \node{$\ddots$}; &  \node{$\vdots$};  &  \node{$\cdots$}; &  \node {$\vdots$};&  \node{$\ddots$};&  \node{$\cdots$};   \\
    \node{-2.3}; &      \node{$\cdots$}; &   \node {-0.6}; &  \node{$\cdots$};  &       \node {5};&  \node{$\cdots$};&  \node{$2$}; \\
  };
  \draw (0, 2.1) node {Population Response Matrix};
  \draw [decorate,decoration={brace,amplitude=5pt}, xshift=-0.5pt, yshift=0pt]
  (-1.95,-1.72) -- (-1.95, 1.72) node [black,midway,xshift=-0.6cm] 
  {\footnotesize $N$};
      \draw [decorate,decoration={brace,amplitude=5pt,mirror}, xshift=0pt, yshift=-0.5pt]
  (-1.95,-1.72) -- (1.95, -1.72) node [black,midway,yshift=-0.6cm] 
  {\footnotesize $L$};
  \draw [decorate,decoration={brace,amplitude=5pt,mirror}, xshift=0.5pt, yshift=0pt]
  (1.95, 0.25) -- (1.95,  1.72) node [black,midway,xshift=0.6cm] 
  {\footnotesize $N_{1}$};
  \draw [decorate,decoration={brace,amplitude=5pt,mirror}, xshift=0.5pt, yshift=0pt]
  (1.95, -1.72) -- (1.95,  -0.25) node [black,midway,xshift=0.6cm] 
  {\footnotesize $N_{H}$};

  \draw [thick arrow, line width=7mm] (2.2, 0)  -> (5.5, 0);
  \node at (3.7, 0.15){ \textcolor{blue}{Stratified Sampling}};
  \node at (3.7, -0.15){\textcolor{red}{Heterogeneous Missing}};

  \draw [->,  dashed, line width=0.3mm, blue] (2.6,0.97) --  (5,0.97) ;
  \draw [->,  dashed, line width=0.3mm, blue] (2.6,-0.97) --  (5,-0.97) ;
  \node at (3.7, 1.1){ \small{Stratum 1}};
  \node at (3.7, -1.1){ \small{Stratum $H$}};

    \end{scope}
  \begin{scope}[xshift = 7.5 cm , yshift = - 6cm]
 \matrix  [nodes = draw, nodes={fill=yellow!20,minimum width=9mm,
 minimum height = 8mm}] 
 {
  \node{\textcolor{red}{?}}; & \node{$\cdots$}; &  \node {-0.3};  &  \node{$\cdots$}; &  \node {\textcolor{red}{?}}; &  \node{$\cdots$};&  \node{2};  \\
  \node{$\vdots$}; &      \node{$\ddots$}; &   \node {$\vdots$}; &  \node{$\cdots$};  &       \node {$\vdots$};&  \node{$\cdots$};&  \node{$\cdots$}; \\
  \node{2.5}; &      \node{$\cdots$}; &   \node {\textcolor{red}{?}}; &  \node{$\cdots$};  &       \node {0}; &  \node{$\cdots$};&  \node{\textcolor{red}{?}}; \\
  \node{$\vdots$}; &  \node{$\vdots$}; &  \node{$\vdots$}; &  \node{$\ddots$};  &      \node{$\vdots$}; &  \node{$\cdots$};&  \node{$\cdots$}; \\
  \node{0.9}; &      \node{$\cdots$}; &   \node {\textcolor{red}{?}}; &  \node{$\cdots$};  &       \node {\textcolor{red}{?}};&  \node{$\cdots$};&  \node{$3$}; \\
  \node{$\vdots$}; & \node{$\ddots$}; &  \node{$\vdots$};  &  \node{$\cdots$}; &  \node {$\vdots$};&  \node{$\ddots$};&  \node{$\cdots$};   \\
  \node{-2.3}; &      \node{$\cdots$}; &   \node {\textcolor{red}{?}}; &  \node{$\cdots$};  &       \node {4};&  \node{$\cdots$};&  \node{\textcolor{red}{?}}; \\
};
\draw (0, 2.1) node {Observed Response Matrix};

\draw [decorate,decoration={brace,amplitude=5pt}, xshift=0.5pt, yshift=0pt]
(1.95,1.72) -- (1.95, -1.72) node [black,midway,xshift=0.6cm] 
{\footnotesize $n$};
    \draw [decorate,decoration={brace,amplitude=5pt,mirror}, xshift=0pt, yshift=-0.5pt]
(-1.95,-1.72) -- (1.95, -1.72) node [black,midway,yshift=-0.6cm] 
{\footnotesize $L$};
\draw [decorate,decoration={brace,amplitude=5pt,mirror}, xshift=-0.5pt, yshift=0pt]
(-1.95, 1.72) -- (-1.95,  0.25) node [black,midway,xshift=-0.6cm] 
{\footnotesize $n_{1}$};
\draw [decorate,decoration={brace,amplitude=5pt,mirror}, xshift=-0.5pt, yshift=0pt]
(-1.95, -0.25) -- (-1.95, -1.72 ) node [black,midway,xshift=-0.6cm] 
{\footnotesize $n_{H}$};

\draw [ thick, line width=4mm, blue!20 ] (-0,2.5) -- (-0,3.15)  --  (4.5,3.15) --  (4.5,2.5) ;
\draw [thick arrow, line width=4mm] (2.25, 3.25)  -> (2.25, 4.25);
\draw (2.25, 3.15) node[red] {MMCSHM algorithm};

  \end{scope}

   \begin{scope}[xshift=12cm  , yshift = -6cm]
 \matrix  [nodes = draw, nodes={fill=pink!20,minimum width=9mm,
 minimum height = 8mm}] 
  {
    \node{0.3}; & \node{0.3}; &  \node {0};  &  \node{$\cdots$}; &  \node {0.1};   \\
    \node{0}; &      \node{0.2}; &   \node {0}; &  \node{$\cdots$};  &       \node {0}; \\
    \node{0.8}; &      \node{0}; &   \node {0.3}; &  \node{$\cdots$};  &       \node {0.7}; \\
    \node{$\vdots$}; &  \node{$\vdots$}; &  \node{$\vdots$}; &  \node{$\ddots$};  &      \node{$\vdots$}; \\
    \node{0.5}; &      \node{0.7}; &   \node {0.5}; &  \node{$\cdots$};  &       \node {0.3}; \\
    \node{0}; &      \node{0.4}; &   \node {0}; &  \node{$\cdots$};  &       \node {0}; \\
    \node{0.3}; &      \node{0}; &   \node {0.3}; &  \node{$\cdots$};  &       \node {0.9}; \\
  };
    \draw [decorate,decoration={brace,amplitude=5pt,mirror}, xshift=0pt, yshift=-0.5pt]
    (-1.39,-1.72) -- (1.39,-1.72) node [black,midway,yshift=-0.6cm] 
{\footnotesize $D$};
\draw [decorate,decoration={brace,amplitude=5pt}, xshift=-0.5pt, yshift=0pt]
(-1.39,-1.72) -- (-1.39, 1.72) node [black,midway,xshift=-0.6cm] 
{\footnotesize $n$};
    \draw (0,2.1) node {Covariate Matrix};
  \end{scope}

\end{tikzpicture}
\caption{
Algorithm illustration, where a question mark represents a missing value. }\label{TIKZ}
\end{figure}

\subsection{Estimation Algorithm}

We propose a mixed matrix completion method in survey with heterogeneous missing  (MMCSHM)  algorithm to obtain the estimator in \eqref{main object} summarized in Algorithm~\ref{algorithm}. 
For ease of presentation, denote
\begin{align}
 \nabla \wh{\ell}_{w}(\bZ) = \sum_{i=1 }^{n} \frac{1}{\pi_{i}} \left[ \sum_{s=1}^{S} \sum_{j=1}^{  m_{s}   } \frac{r_{ij}^{(s)}}{\wh{p}_{ij}^{(s)}}\left\{-y_{ij}^{(s)}+(g^{( s)})'\left(z_{ij}^{(s)}\right)\right\}\right]\bE_{ij}^{(s)},
\end{align}
where $\bE_{ij}^{(s)}$ is the indicator matrix for the $i$-th subject and the $j$-th  variable within the $s$-th category.
Furthermore, for a matrix $\bS \in \mathbb{\bR}^{n\times L}$, a singular value decomposition is $\bS=\bU\bW\bV\trans$, where $\bU\in \mathbb{R}^{n\times r}$, 
$\bV \in \mathbb{R}^{L \times r}$, $\bW = \mbox{diag}(\sigma_{1}, \ldots, \sigma_{r})$ and $\sigma_1 \geq \ldots \geq \sigma_{r} > 0.$ 
Let $\bW_{1} = \mbox{diag}(\sigma_{1}, 0, \ldots, 0)$. Denote the rank-1 approximation of $\bS$ as $\mbox{rank-1 SVD}(\bS) = \bU\bW_{1}\bV\trans$, and it is used as the initialization for our proposed estimation procedure.  Define the singular value thresholding operator \citep{cai2010singular}  for $\bS$ with the parameter $\tau$ as 
$
   \mbox{SVT}_{\tau}(\bS) = \bU  \mbox{diag}( (\sigma_{1} -\tau   )_{+} \ldots (\sigma_{r} -\tau   )_{+}  )    \bV\trans,
$
where $x_{+} = \max\{x, 0\}$ for $x \in \mathbb{R}$.
Specifically, in the first stage, we opt for logistic regression to model the entry-wise missing mechanism with auxiliary information. In the second stage (Step 2 to Step 12 of Algorithm~\ref{algorithm}), with the fitted response probability,  we leverage a decent version of FISTA \citep{beck2017first} to solve \eqref{main object}. 
 The computational complexity for a single iteration is mainly determined by the singular value decomposition, which is $\mathcal{O}(n(D+L)\min\{n, D+L\})$. Consequently, for $K$ iterations, the overall computational complexity becomes $\mathcal{O}(n(D+L)K\min\{n, D+L\})$.
The convergence analysis of the proposed algorithm will be presented in the following section.

\begin{algorithm}[!t]
   \caption{MMCSHM algorithm}
   \label{algorithm}
   \KwIn{The observed matrix $\bY$, missingness matrix $\bR$, covariate matrix $\bX$, sample size $\{n_{h}\}_{h=1}^{H}$, 
   population size $N$, sampling probability $\{\pi_{i}\}_{i=1}^{n}$, tuning parameter $\tau$, and learning depth $K$.}
	\kwInit{$[\bU_0, \bW_{0}, \bV_0] = \mbox{rank}-1\ \mbox{SVD}(\bR \circ \bY) $. $\bZ_{1}^{(0)} = \bZ_{2}^{(0)} = \bU_0 \bW_{0} \bV_0\trans$. }
   Compute $\{\wh{p}_{ij}^{(s)}\}$ by fitting the logistic model with $\bX$ and $\bR$ within the corresponding strata.\\
   \For {$k = 1$ \KwTo $K$, }
   	{ 
    	Compute $\theta_{k} = 2/(k+1)$.\\
    	Compute $\bQ = (1-\theta_{k})\bZ_{1}^{(k-1)} +  \theta_{k} \bZ_{2}^{(k-1)}. $\\
    	Compute $\bT = \bQ - \tau^{-1} \mbox{SVT}_{\tau}( \nabla\wh{\ell}_{w}(\bQ)  )$. \\
    Compute $\wt{\bT} = \mbox{SVT}_{\tau} [\bX, \bT]$.\\
    	Compute $\wt{\bZ}_{1}^{(k)} = \wt{\bT}[ \bzero_{n\times D}  , \bI_{n \times L} ]\trans$. \\
   		 \uIf{ $\mathcal{L}_{\tau}(\wt{\bZ}_{1}^{(k)}) < \mathcal{L}_{\tau}(\bZ_{1}^{(k)})$ }{
    	$ \bZ_{1}^{(k)} = \wt{\bZ}_{1}^{(k)} $;
 		}
 		\Else{
    		$ \bZ_{1}^{(k)} =  \bZ_{1}^{(k-1)}$; 
  		}
      $\bZ_{2}^{(k)} = \bZ_{1}^{(k-1)}+  \theta_{k}^{-1} (  \wt{\bZ}_{1}^{(k)}  - \bZ_{1}^{(k-1)}  ). $\\
   	}
   \KwOut{$\bZ_{1}^{(K)}$.}
\end{algorithm}

\section{Theoretical Guarantee}

In this section, we present the statistical guarantee and the convergence analysis of the proposed algorithm. Before we dive into the theoretical results, we introduce the following technical assumptions.

\begin{assumption}\label{bound of survey}
 There exist   positive constants $\alpha_{L}$ and $\alpha_{U}$ such that
$
    \alpha_{L} \leq (N\pi_{i})^{-1}  n_{h}         \leq \alpha_{U},
$
for $ \sum_{l=1}^{h-1}n_{l} + 1 \leq i \leq \sum_{l=1}^{h}n_{l}$ and $h\in [\![H]\!]$.
\end{assumption}

\begin{assumption}\label{bound of missing rate}
 There exists a positive constant $p_{\rm min} > 0$ such that $p_{\rm min} \leq p_{ij}^{(s)}$ 
for $(s,i,j) \in [\![S]\!]\times  [\![n]\!] \times [\![m_{s}]\!]$.
\end{assumption}

\begin{assumption}\label{bound of Z}
 There exists a positive constant $\beta$, such that
$
  \Norm{\bZ^{\ast}}_{\infty}  \leq \beta.
$
\end{assumption}

\begin{assumption}\label{bound of exponential family}
   Denote 
   $
   C( z_{ij}^{\ast(s)}   ) = \inf_{0\leq r \leq r(\bZ)}r^{-1}\mathbb{E}_{z_{ij}^{\ast(s)} }[ \exp\{   r|Y_{ij}^{(s)} - (g^{(s)})'(z_{ij}^{\ast (s)}) |  \}], 
      $
 where $r(\bZ)$ is the natural parameter space for the exponential family. Assume that there exists a constant $C_{\bZ}$ such that $ C( z_{ij}^{\ast(s)}   ) \leq C_{\bZ} < \infty$ for $(s,i,j) \in [\![S]\!]\times  [\![n]\!] \times [\![m_{s}]\!]$.
   \end{assumption}

\begin{assumption}\label{bound of second derivative of g}
Let $\mathcal{D} = [ -\beta - \epsilon, \beta + \epsilon]$, for some $\epsilon > 0$. 
For any $z \in \mathcal{D}$, there exist positive constants $L_{\beta}$ and $U_{\beta}$, such that
$L_{\beta}  \leq   (g^{(s)})''(z) \leq  U_{\beta},$
where $g^{(s)}$ is the link function of the exponential family for the $s$-th category of the questionaire, for $s \in [\![S]\!] $.
\end{assumption}

Assumption~\ref{bound of survey} is commonly used to control the sampling weights in survey sampling; see Section 1.3 in \cite{wayne2009sampling}. Assumption~\ref{bound of missing rate} controls the response probabilities for the entries. Assumption~\ref{bound of Z} implies that the parameters are bounded. Assumption~\ref{bound of exponential family}  requires all $ C( z_{ij}^{\ast(s)}   )$'s  bounded from above by a constant $C_{\bZ}$. Under the framework of the canonical exponential family, Assumption~\ref{bound of second derivative of g} indicates that $\mbox{Var}(Y_{ij}^{(s)}|z_{ij}^{\ast (s)}) =  (g^{(s)})''(z_{ij}^{\ast (s)}) > 0$ with extended support. 

Denoting $\bP^{(s)\dagger} = ( (p_{ij}^{(s)})^{-1} )_{ij}$,  the inverse probability matrix is $\bP^{\dagger} = [\bP^{(1)\dagger}, \ldots, \bP^{(S)\dagger}]$. Similarly, 
let $\hat{\bP}^{\dagger} = [\hat{\bP}^{(1)\dagger}, \ldots,  \hat{\bP}^{(S)\dagger}]$, where $\hat{\bP}^{(s)\dagger} = ((\hat{p}_{ij}^{(s)})^{-1})_{ij}$. Further, denote
\begin{equation*}
   \zeta_{\bR}  = \max\left\{ \Norm{\bR \circ \left( \wh{\bP}^{\dagger}  - \bP^{\dagger} \right)  }_{\infty, 2}       , 
      \Norm{\bR\trans \circ ( \wh{\bP}^{\dagger}  - \bP^{\dagger} )\trans  }_{\infty, 2}   \right\},
   \end{equation*}
   and
   \begin{align*}
  & \wt{\Delta} = \max \left\{  \frac{\left\{C_{\bZ} (n\vee L)\log(n+L) \right\}^{1/2} } {\sqrt{p_{\rm min}} nL  }, 
    \frac{ \alpha_{U} C_{\bZ} \zeta_{\bR} (\log(n+L))^{1+\delta}   }{   nL } ,\right. \notag \\
   & \left. \quad\quad\quad\quad\quad\quad\quad\quad\quad\quad\quad  \sqrt{\frac{  \alpha_{U}\beta L_{\beta}^{2} |1/p_{\rm min}|(n \vee L)\log(n+L)   }{nL}}\right\}
\end{align*} 
for some $\delta > 0$.   The upper bound of $\norm{\hat{\bZ} - \bZ^{\ast}}_{F}^{2}$ can be obtained by the following theorem,
which provides a statistical guarantee of our method.

\begin{theorem}\label{statistical guarantee}
    Suppose Assumptions 1$\sim$5 hold. Then there exist positive constants $C_{1}, C_{2}, C_{3}$ such that for $\tau \geq C_{3} \wt{\Delta}$, 
    \begin{align*}
    \frac{1}{nL}\Norm{\wh{\bZ} - \bZ^{\ast}}_{F}^{2} \leq & 
    \max\left\{
     \frac{ C_{1} \alpha_{L}}  { {\alpha_{U} \beta^{2}} } \sqrt{\frac{\log(n+L)}{  nL } }, 
    \frac{ C_{2}   {\rm rank} (\bM^{\ast} )} {  \alpha_{L}p_{\rm min} } \left\{    \tau^{2} + \frac{  \alpha_{U}^{2}  \log(n+L)   }{n  \wedge L}  \right\}
    \right\}
    \end{align*}
    holds with probability at least $1 - 3/(n+L)$, where $\bM^{\ast} = [\bX, \bZ^{\ast}]$.
\end{theorem}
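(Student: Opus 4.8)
The plan is to follow the standard regularized $M$-estimation route: derive a basic inequality from the optimality of $\widehat{\bZ}$, lower-bound the resulting loss gap by a restricted strong convexity (RSC) term, upper-bound the stochastic gradient in operator norm, and close the argument with the nuclear-norm cone decomposition. Writing $\widehat{\bM} = [\bX, \widehat{\bZ}]$ and $\bM^\ast = [\bX, \bZ^\ast]$, the key structural observation is that $\widehat{\bM} - \bM^\ast = [\bzero, \widehat{\bZ} - \bZ^\ast]$, so $\|\widehat{\bM} - \bM^\ast\|_F = \|\widehat{\bZ} - \bZ^\ast\|_F$ and the penalty acts exactly on the deviation of interest. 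Since $\widehat{\bZ}$ minimizes $\mathcal{L}_\tau$, I would start from $\mathcal{L}_\tau(\widehat{\bZ}) \leq \mathcal{L}_\tau(\bZ^\ast)$, which rearranges into $(NL)^{-1}\{\widehat{\ell}_w(\widehat{\bZ}) - \widehat{\ell}_w(\bZ^\ast)\} \leq \tau(\|\bM^\ast\|_\ast - \|\widehat{\bM}\|_\ast)$.

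For the left-hand side I would Taylor-expand the smooth loss $\widehat{\ell}_w$ about $\bZ^\ast$. Because each $g^{(s)}$ is convex with $(g^{(s)})'' \geq L_\beta$ on the extended domain $\mathcal{D}$ (Assumption~\ref{bound of second derivative of g}), the Hessian of $\widehat{\ell}_w$ is diagonal with entries $\pi_i^{-1}(r_{ij}^{(s)}/\widehat{p}_{ij}^{(s)})(g^{(s)})''(\cdot)$, yielding a quadratic lower bound $\widehat{\ell}_w(\widehat{\bZ}) - \widehat{\ell}_w(\bZ^\ast) - \langle\nabla\widehat{\ell}_w(\bZ^\ast), \widehat{\bZ}-\bZ^\ast\rangle \geq \kappa\|\widehat{\bZ}-\bZ^\ast\|_F^2$ with $\kappa$ proportional to $L_\beta\,\alpha_L\,p_{\min}$ after the weights are replaced by their expectations. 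The subtlety is that $r_{ij}^{(s)}$ zeroes out individual weights and $\widehat{p}_{ij}^{(s)}$ replaces $p_{ij}^{(s)}$, so I would control the weighted Hessian by its expectation (via Assumptions~\ref{bound of survey} and~\ref{bound of missing rate}) plus a concentration remainder, into which the plug-in discrepancy $\widehat{\bP}^\dagger - \bP^\dagger$ enters through $\zeta_{\bR}$.

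Next I would bound the stochastic term by trace duality, $|\langle\nabla\widehat{\ell}_w(\bZ^\ast), \widehat{\bZ}-\bZ^\ast\rangle| \leq \|\nabla\widehat{\ell}_w(\bZ^\ast)\|\cdot\|\widehat{\bZ}-\bZ^\ast\|_\ast$. The crucial point is that at the truth $-Y_{ij}^{(s)} + (g^{(s)})'(z_{ij}^{\ast(s)})$ is centered, so $\nabla\widehat{\ell}_w(\bZ^\ast)$ is a sum of (across strata independent, within-stratum conditionally mean-zero) matrices, and a matrix Bernstein argument gives $\|\nabla\widehat{\ell}_w(\bZ^\ast)\| \lesssim \widetilde{\Delta}$ with high probability: Assumption~\ref{bound of exponential family} supplies the sub-exponential moment control $C_{\bZ}$ that produces the $\{C_{\bZ}(n\vee L)\log(n+L)\}^{1/2}$ scaling, the estimation error $\widehat{\bP}^\dagger - \bP^\dagger$ contributes the $\zeta_{\bR}$ term, and the conditional-variance proxy (using $\|\bZ^\ast\|_\infty\leq\beta$ from Assumption~\ref{bound of Z}) yields the third component of $\widetilde{\Delta}$. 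Choosing $\tau \geq C_3\widetilde{\Delta}$ then guarantees that $\tau$ dominates this operator norm, so the stochastic contribution is absorbed.

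Finally I would invoke the nuclear-norm decomposition relative to the rank-$r$ SVD of $\bM^\ast$, namely $\|\bM^\ast\|_\ast - \|\widehat{\bM}\|_\ast \leq \|\mathcal{P}_{\mathcal{M}}(\widehat{\bM}-\bM^\ast)\|_\ast - \|\mathcal{P}_{\mathcal{M}^\perp}(\widehat{\bM}-\bM^\ast)\|_\ast$, together with $\|\mathcal{P}_{\mathcal{M}}(\cdot)\|_\ast \leq \sqrt{2r}\,\|\cdot\|_F$ for $r=\mathrm{rank}(\bM^\ast)$. Combining this cone control with the RSC lower bound and the gradient bound gives $\kappa\|\widehat{\bZ}-\bZ^\ast\|_F^2 \lesssim \tau\sqrt{r}\,\|\widehat{\bZ}-\bZ^\ast\|_F$, which solves to the fast-rate term $\mathrm{rank}(\bM^\ast)\{\tau^2 + \alpha_U^2\log(n+L)/(n\wedge L)\}/(\alpha_L p_{\min})$. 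The $\max$ in the statement arises from a peeling step: when the error falls below the threshold at which RSC becomes effective, I would instead bound it using the boundedness $\|\bZ^\ast\|_\infty\leq\beta$ to recover the dimension-driven slow-rate term $\sqrt{\log(n+L)/(nL)}$. The main obstacle I anticipate is establishing RSC uniformly over the error cone while simultaneously absorbing the plug-in error $\widehat{p}-p$ and respecting the dependence induced by stratified sampling without replacement, since the inclusion indicators are not independent and the matrix-concentration tools must be applied stratum by stratum; a final union bound over the few high-probability events delivers the stated $1-3/(n+L)$ guarantee.
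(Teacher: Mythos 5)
Your proposal is correct and follows essentially the same route as the paper's own proof: the basic inequality from the optimality of $\wh{\bZ}$, a strong-convexity lower bound via $(g^{(s)})''\geq L_{\beta}$ (the paper phrases this through the Bregman divergence, which is the same second-order remainder), trace duality with a matrix-concentration bound on the centered score so that $\tau \geq C_{3}\wt{\Delta}$ absorbs it, the explicit plug-in decomposition isolating $\wh{\bP}^{\dagger}-\bP^{\dagger}$ (the $\zeta_{\bR}$ contribution), the nuclear-norm projection/cone decomposition relative to $\bM^{\ast}=[\bX,\bZ^{\ast}]$ using $[\bzero,\wh{\bZ}-\bZ^{\ast}]$, and a final case split --- your peeling step, the paper's two-case analysis on $\langle \bPi^{\dagger},(\wh{\bZ}-\bZ^{\ast})\circ(\wh{\bZ}-\bZ^{\ast})\rangle$ --- producing the maximum of the slow and fast rates with probability $1-3/(n+L)$. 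There is no substantive difference in approach, and your anticipated obstacle (restricted strong convexity over the error cone under the sampling dependence and estimated propensities) is exactly what the paper's Lemmas S4--S5 supply.
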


\begin{proof}
First of all, by $R^{\ast}(\wh{\bZ}) \leq R^{\ast}(\bZ)$, we have
\begin{align*}
\frac{1}{NL}  \wh{\ell}_{w}(\wh{\bZ}) + \tau \norm{ [\bX, \wh{\bZ} ]   }_{\ast}
\leq \frac{1}{NL}  \wh{\ell}_{w}(\bZ^{\ast}) + \tau \Norm{ [\bX, \bZ^{\ast} ]   }_{\ast},\notag
\end{align*}
which yields
\begin{align}
\frac{1}{NL}\sum_{i=1}^{n}\left\{   \wh{\ell}_{w, i}(\wh{\bZ}) -  \wh{\ell}_{w, i}(\bZ^{\ast})   \right\}
\leq \tau \left(  \norm{ [\bX, \bZ^{\ast} ]   }_{\ast} - \norm{ [\bX, \wh{\bZ} ]   }_{\ast}       \right).\notag
\end{align}
Expand the terms $\ell_{w, i}(\wh{\bZ})$ and $\ell_{w, i}(\bZ^{\ast})$, we obtain
\begin{align*}
&\frac{1}{nL}\sum_{i=1}^{n}\frac{n}{N\pi_{i}} \sum_{s=1}^{S}\sum_{j=1}^{m_{s}} \frac{r_{ij}^{(s)}}{\wh{p}_{ij}^{(s)}}
\left\{  g^{(s)}(\wh{z}_{ij}^{(s)}) - g^{(s)}(z_{ij}^{\ast(s)})       \right\}\notag\\
\leq& \tau \left(  \norm{ [\bX, \bZ^{\ast} ]   }_{\ast} - \norm{ [\bX, \wh{\bZ} ]   }_{\ast}       \right)
+ \frac{1}{nL}\sum_{i=1}^{n}\frac{n}{N\pi_{i}} \sum_{s=1}^{S}\sum_{j=1}^{m_{s}} \frac{r_{ij}^{(s)}}{\wh{p}_{ij}^{(s)}}
y_{ij}^{(s)} \left( \wh{z}_{ij}^{(s)} - z_{ij}^{\ast(s)}       \right). 
\end{align*}
By definition of Bregman divergence, it follows that
\begin{align}\label{main basic}
&\frac{1}{nL}\sum_{i=1}^{n}\frac{n}{N\pi_{i}} \sum_{s=1}^{S} \sum_{j=1}^{m_{s}} \frac{r_{ij}^{(s)}}{\wh{p}_{ij}^{(s)}}d_{g^{(s)}}(\wh{z}_{ij}^{(s)}, z_{ij}^{\ast(s)})\notag\\
\leq&  \tau \left(  \Norm{ [\bX, \bZ^{\ast} ]   }_{\ast} - \norm{ [\bX, \wh{\bZ} ]   }_{\ast}       \right) \notag\\
&\quad\quad-
\frac{1}{nL}\sum_{i=1}^{n}\frac{n}{N\pi_{i}} \sum_{s=1}^{S}\sum_{j=1}^{m_{s}} \frac{r_{ij}^{(s)}}{\wh{p}_{ij}^{(s)}}
\left\{  \left(g^{(s)}\right)'(z_{ij}^{\ast(s)})     - y_{ij}^{(s)} \right\} \left( \wh{z}_{ij}^{(s)} - z_{ij}^{\ast(s)}       \right) .
\end{align}
To handle the left-hand-side of \eqref{main basic}, by   $L_{\beta}(x-y)^{2} \leq 2d_{g(s)}(x-y) \leq U_{\beta}(x-y)^{2}$, we have 
\begin{align}\label{bregman plug in}
&\frac{1}{nL}\sum_{i=1}^{n} \frac{n}{N\pi_{i}}\sum_{s=1}^{S}\sum_{j=1}^{m_{s}}\frac{r_{ij}^{(s)}}{\wh{p}_{ij}^{(s)}} \left( \wh{z}_{ij}^{(s)} - z_{ij}^{\ast(s)}   \right)^{2}
\leq\frac{2}{nLL_{\beta}}\sum_{i=1}^{n} \frac{n}{N\pi_{i}}\sum_{s=1}^{S}\sum_{j=1}^{m_{s}}\frac{r_{ij}^{(s)}}{\wh{p}_{ij}^{(s)}}d_{g^{(s)}}( \wh{z}_{ij}^{(s)}, z_{ij}^{\ast(s)}   ).
\end{align}
Plug \eqref{bregman plug in} into \eqref{main basic}, it turns out that
\begin{align}\label{focus}
&\frac{L_{\beta}}{ 2nL  }  \sum_{i=1}^{n} \frac{n}{N\pi_{i}}\sum_{s=1}^{S}\sum_{j=1}^{m_{s}}\frac{r_{ij}^{(s)}}{\wh{p}_{ij}^{(s)}} \left( \wh{z}_{ij}^{(s)} - z_{ij}^{\ast(s)}   \right)^{2}\notag\\
\leq & \tau \left(  \Norm{ [\bX, \bZ^{\ast} ]   }_{\ast} - \norm{ [\bX, \wh{\bZ} ]   }_{\ast}       \right)\notag\\
&\quad\quad-
\frac{1}{nL}\sum_{i=1}^{n}\frac{n}{N\pi_{i}} \sum_{s=1}^{S}\sum_{j=1}^{m_{s}} \frac{r_{ij}^{(s)}}{\wh{p}_{ij}^{(s)}}
\left\{  \left(g^{(s)}\right)'(z_{ij}^{\ast(s)})     - y_{ij}^{(s)} \right\} \left( \wh{z}_{ij}^{(s)} - z_{ij}^{\ast(s)}       \right). 
\end{align}
Recall that the inverse probability matrix $\bP^{\dagger} = [\bP^{(1)\dagger}, \ldots, \bP^{(S)\dagger}]$ and its estimated surrogate matrix $\wh{\bP}^{\dagger} = [\wh{\bP}^{(1)\dagger}, \ldots,  \wh{\bP}^{(S)\dagger}]$, where $\bP^{(s)\dagger} = ( (p_{ij}^{(s)})^{-1} )_{ij}$ and $\wh{\bP}^{(s)\dagger} = ((\wh{p}_{ij}^{(s)})^{-1})_{ij}$.  Let $\mathbf{1}_{m}$ be a vector of all ones elements with length $m$. Denote $\bPi^{\dagger} = N^{-1}(n_{h(1)}\pi_{1}^{-1}, \ldots, n_{h(n)}\pi_{n}^{-1})\trans\mathbf{1}_{L}\trans $. Furthermore, let $\bJ = \mathbf{1}_{n}\mathbf{1}_{L}\trans$. Inequality \eqref{focus} can be written into the following matrix representation
\begin{align}\label{matrix first}
&\frac{L_{\beta}}{2nL} \langle \bR \circ \wh{\bP}^{\dagger}, \bPi^{\dagger} \circ (\wh{\bZ} - \bZ^{\ast})\circ  (\wh{\bZ} - \bZ^{\ast})   \rangle \notag\\
\leq &\tau \left(  \Norm{ [\bX, \bZ^{\ast} ]   }_{\ast} - \norm{ [\bX, \wh{\bZ} ]   }_{\ast}       \right)+
\frac{1}{nL}  \langle \bR \circ \wh{\bP}^{\dagger} \circ \bG,  \bPi^{\dagger} \circ (\wh{\bZ} - \bZ^{\ast})\rangle,
\end{align}
where $\circ$ is the Hadamard product between matrices,  $\bG = [\bG^{(1)}, \ldots, \bG^{(S)}]$ and $\bG^{(s)} = (    y_{ij}^{(s)}  - (g^{(s)})'(z_{ij}^{\ast (s)})    )_{ij}$. We further expand the terms in \eqref{matrix first} as
\begin{align}\label{decomposition one}
 &\langle   \bR \circ \wh{\bP}^{\dagger}, \bPi^{\dagger} \circ (\wh{\bZ} - \bZ^{\ast})\circ  (\wh{\bZ} - \bZ^{\ast})   \rangle \notag\\
 = &  \langle   \bR \circ \bP^{\dagger},  \bPi^{\dagger} \circ (\wh{\bZ} - \bZ^{\ast})\circ  (\wh{\bZ} - \bZ^{\ast})    \rangle  + \langle    \bR \circ \wh{\bP}^{\dagger}  -   \bR \circ \bP^{\dagger}, \bPi^{\dagger} \circ (\wh{\bZ} - \bZ^{\ast})\circ  (\wh{\bZ} - \bZ^{\ast})    \rangle.
\end{align}
Therefore, plug $\eqref{decomposition one}$  into \eqref{matrix first}, we have
\begin{align}
&  \frac{L_{\beta}}{  2nL  }\langle   \bR \circ \bP^{\dagger},  \bPi^{\dagger} \circ (\wh{\bZ} - \bZ^{\ast})\circ  (\wh{\bZ} - \bZ^{\ast})    \rangle  \notag\\
\leq &
\underbrace{ \tau \left(  \Norm{ [\bX, \bZ^{\ast} ]   }_{\ast} - \norm{ [\bX, \wh{\bZ} ]   }_{\ast}       \right)}_{T1} + \underbrace{ \frac{1}{nL} \langle  \bR \circ \wh{\bP}^{\dagger}  \circ \bG, \bPi^{\dagger} (\wh{\bZ} - \bZ^{\ast})\rangle}_{T2}\notag\\
&- \underbrace{ \frac{L_{\beta}}{2nL} \langle   \bR \circ (\wh{\bP}^{\dagger} - \bP^{\dagger})  ,  \bPi^{\dagger} \{(\wh{\bZ} - \bZ^{\ast})\circ  (\wh{\bZ} - \bZ^{\ast})  \} \rangle}_{T3}.\label{main decomposition}
\end{align}
Let $\tilde{\Delta} =  2 \Delta^{(1)}/L_{\beta} + \Delta^{(2)}  $ and $\tilde{\tau} = 2\tau/L_{\beta}$. By \eqref{main decomposition} and Lemma S1-S3, with probability at least $1- 3/(n+L)$, there exists a constant $\tilde{C} > 0$ such that
\begin{align*}
\frac{1}{nL} \langle   \bR \circ \bP^{\dagger},  \bPi^{\dagger} \circ (\wh{\bZ} - \bZ^{\ast})\circ  (\wh{\bZ} - \bZ^{\ast})    \rangle  
 \leq 
 \tilde{\tau} \left(  \Norm{ [\bX, \bZ^{\ast} ]   }_{\ast} - \norm{ [\bX, \wh{\bZ} ]   }_{\ast}       \right) + \tilde{C} \tilde{\Delta}\Norm{ \bZ^{\ast}  - \hat{\bZ}  }_{\ast}.
\end{align*}
Denote $\bM^{\ast} = [\bX, \bZ^{\ast}]$. Let the singular value decomposition of $\bM^{\ast}= \sum_{i=1}^{r_{\bM^{\ast}}   }  \sigma_{i, \bM^{\ast}}   \bu_{i, \bM^{\ast}}\bv_{i, \bM^{\ast}}\trans $. Denote $\bA_{\bu} = [ \bu_{1, \bM^{\ast}}, \ldots, \bu_{r_{\bM^{\ast}} , \bM^{\ast}}  ]$ and $\bA_{\bv} = [ \bv_{1, \bM^{\ast}}, \ldots, \bv_{r_{\bM^{\ast}} , \bM^{\ast}}  ]$ . The projection operators can be defined as
\begin{align}\label{perpendicular}
&\mathcal{P}_{\bA^{\perp}} (\bB) = \bP_{\bA_{\bu}^{\perp}}\bB \bP_{\bA_{\bv}^{\perp}}, \quad\mathcal{P}_{\bA} (\bB) = \bB -  \mathcal{P}_{\bA^{\perp}} (\bB) = \bP_{A_{\bu}}\bB - \bP_{\bA_{\bu}^{\perp}}\bB\bP_{\bA_{\bv}},
\end{align}
where $ \bP_{\bA_{\bu}^{\perp}}$ is the projection matrix generated by $\bA_{\bu}^{\perp}$ and 
$\bA_{\bu}^{\perp} = \bA_{\bu} - \bP_{\bA_{\bu}}(\bA_{\bu})$.
  It  follows that
\begin{align}
 &\Norm{ [\bX, \bZ^{\ast} ]   }_{\ast} - \norm{ [\bX, \wh{\bZ} ] }_{\ast}\notag\\
 \leq& \Norm{ \mathcal{P}_{  \bM^{\ast} } \left( [\bX, \bZ^{\ast} ]  - [\bX, \wh{\bZ} ]    \right) }_{\ast}- \Norm{ \mathcal{P}_{  (\bM^{\ast})^{\perp} } \left( [\bX, \bZ^{\ast} ]  - [\bX, \wh{\bZ} ]    \right) }_{\ast} \notag\\
 \leq& \Norm{ \mathcal{P}_{  \bM^{\ast} } \left( [\bzero, \bZ^{\ast} -  \wh{\bZ} ]   \right) }_{\ast} - \Norm{ \mathcal{P}_{  (\bM^{\ast})^{\perp} } \left( [\bzero, \bZ^{\ast} -  \wh{\bZ} ]     \right) }_{\ast}. \label{nuclear norm difference}
\end{align}
Further, 
\begin{align*}
\Norm{ \mathcal{P}_{  \bM^{\ast} } \left( [\bzero, \bZ^{\ast} -  \wh{\bZ} ]   \right) }_{\ast}
\leq  &\sqrt{ {\rm rank}( \mathcal{P}_{  \bM^{\ast}   } ) }  \Norm{  [\bzero, \bZ^{\ast} -  \wh{\bZ} ]  }_{F}
= \sqrt{ {\rm rank}(  \mathcal{P}_{  \bM^{\ast}   }) }  \Norm{  \bZ^{\ast} -  \wh{\bZ}  }_{F}.
\end{align*}
By the second equation in \eqref{perpendicular}, we have ${\rm rank  }( \mathcal{P}_{  \bM^{\ast}   }) \leq 2 {\rm rank}(\bM^{\ast})$. Take $\tilde{\tau} \geq 2^{-1}3\tilde{C}\tilde{\Delta}$, it follows that
\begin{align}\label{rank inequality}
&\quad\frac{1}{nL}\langle   \bR \circ \bP^{\dagger},  \bPi^{\dagger} \circ (\wh{\bZ} - \bZ^{\ast})\circ  (\wh{\bZ} - \bZ^{\ast})    \rangle \notag\\
&\leq \tilde{\tau} \sqrt{2{\rm rank}(\bM^{\ast})} \Norm{  \bZ^{\ast} -  \wh{\bZ}  }_{F} + 
\frac{2}{3}\tilde{\tau} \sqrt{ 2{\rm rank}(\bZ^{\ast}) } \Norm{  \bZ^{\ast} -  \wh{\bZ}  }_{F}\notag\\
&\leq \tilde{\tau} \left\{\sqrt{{\rm rank}(\bM^{\ast})}  + \frac{2}{3}\sqrt{ {\rm rank}(\bZ^{\ast}) }\right\} \Norm{  \bZ^{\ast} -  \wh{\bZ}  }_{F}\notag\\
&\leq \tilde{\tau}\sqrt{2 {\rm rank}(\bM^{\ast}) + \frac{8}{9} {\rm rank}(\bZ^{\ast}) }    \Norm{  \bZ^{\ast} -  \wh{\bZ}  }_{F}\notag\\
&\leq \tilde{\tau}\sqrt{2 D + \frac{26}{9} {\rm rank}(\bZ^{\ast}) }    \Norm{  \bZ^{\ast} -  \wh{\bZ}  }_{F}.
\end{align}
By 
Lemma S6 in the supplementary material,
we have
\begin{align*}
\Norm{\wh{\bZ} - \bZ^{\ast}}_{\ast} 
 \leq &\Norm{[\bzero, \wh{\bZ} - \bZ^{\ast}]}_{\ast} \notag\\
 \leq &\Norm{\mathcal{P}_{  \bM^{\ast} } \left( [\bzero, \bZ^{\ast} -  \wh{\bZ} ] \right)}_{\ast} + \Norm{\mathcal{P}_{  (\bM^{\ast})^{\perp} } \left( [\bzero, \bZ^{\ast} -  \wh{\bZ} ] \right)}_{\ast}  \notag\\
 \leq & 6  \Norm{\mathcal{P}_{  \bM^{\ast} } \left( [\bzero, \bZ^{\ast} -  \wh{\bZ} ] \right)}_{\ast}\notag\\
 \leq & \sqrt{72 {\rm rank}(\bM^{\ast})  }  \Norm{ \bZ^{\ast} -  \wh{\bZ}}_{F}.
\end{align*}
Denote $\tilde{\beta} = \norm{ \wh{\bZ} - \bZ^{\ast}  }_{\infty}$. Apparently, $\tilde{\beta} \leq 2 \beta$. Further, define
\begin{align*}
&\mathcal{C}(r)= \left\{  \bZ \in \mathbb{R}^{n \times L}: \norm{\bZ}_{\infty} = 1, \norm{\bZ}_{\ast} \leq \sqrt{r} \norm{\bZ}_{F}, \langle \bPi^{\dagger} ,      (\wh{\bZ} - \bZ^{\ast} ) \circ (\wh{\bZ} - \bZ^{\ast} ) \rangle  \geq \alpha_{U} \sqrt{\frac{64\log(n+L)}{ \log(6/5) nL }}     \right\}.
\end{align*}
\begin{enumerate}
\item If $ \langle \bPi^{\dagger},  (\wh{\bZ} - \bZ^{\ast}) \circ   (\wh{\bZ} - \bZ^{\ast}) \rangle < \alpha_{U} \beta^{2}  \sqrt{\frac{64\log(n+L)}{ \log(6/5) nL }}$, it yields
\begin{align*}
\frac{1}{nL}\Norm{\wh{\bZ} -  \bZ^{\ast}}_{F}^{2}\leq &\frac{1}{\alpha_{L}nL}\langle \bPi^{\dagger},  (\wh{\bZ} - \bZ^{\ast}) \circ   (\wh{\bZ} - \bZ^{\ast}) \rangle
\leq  \frac{\alpha_{L}} {\alpha_{U} \beta^{2}}  \sqrt{\frac{64\log(n+L)}{ \log(6/5) nL }}.
\end{align*}
\item Otherwise, $ \langle \bPi^{\dagger},  (\wh{\bZ} - \bZ^{\ast}) \circ   (\wh{\bZ} - \bZ^{\ast}) \rangle \geq \alpha_{U} \beta^{2}  \sqrt{\frac{64\log(n+L)}{ \log(6/5) nL }}$. In this case, $\tilde{\beta}^{-1} (\wh{\bZ} - \bZ^{\ast}) \in \mathcal{C}( 72 {\rm rank}(\bM^{\ast})  )  $. By \eqref{rank inequality}, 
Lemma S4 and Lemma S5 in the supplementary material,
we have 
\begin{align*}
\frac{1}{nL}\Norm{\wh{\bZ} -  \bZ^{\ast}}_{F}^{2} 
\leq & \frac{1}{\alpha_{L}nL}\langle \bPi^{\dagger},  (\wh{\bZ} - \bZ^{\ast}) \circ   (\wh{\bZ} - \bZ^{\ast}) \rangle\notag\\
 \leq&   \frac{C   {\rm rank}({\bM^{\ast}})}{ \alpha_{L}p_{\rm min}} [     \tau^{2} +   \{\mathbb{E}( \norm{\bPhi^{(4)}})\}^{2}]\notag\\
 \leq& \frac{C \tau^{2}   {\rm rank}    ({\bM^{\ast}})   }{\alpha_{L}p_{ \rm min}} + 
 \frac{C \alpha_{U}^{2}  {\rm rank}  ({\bM^{\ast}})    \log(n+L)   }{\alpha_{L} p_{\rm min}(n  \wedge L)},
\end{align*}
holds with probability at least $1- 3/(n+L)$, where $\bPhi^{(4)}$ is defined in Lemma S4
in the supplementary material, which completes the proof of the theorem.
\end{enumerate}
\end{proof}

Theorem~\ref{statistical guarantee} presents an upper bound with the same order as the methods in \citet{robin2020main} and  \citet{alaya2019collective}, but with a different proof technique faithfully incorporating sampling weights, estimated response probabilities, and the covariate matrix. 
Meanwhile, the following theorem ensures that the proposed algorithm enjoys a sub-linear convergence rate.
\begin{theorem}\label{convergence analysis}
    Suppose Assumptions 1$\sim$5 hold. Then, there exists constant $C_{3} > 0$ such that for $\tau \geq C_{3} \wt{\Delta}$, 
    \begin{align}
     \abs{\mathcal{L}_{\tau} (\bZ_{1}^{(k)} )  - \mathcal{L}_{\tau} (\bZ^{\ast}) }\leq \frac{   \tau \norm{    \bZ_{1}^{(0)} - \bZ^{\ast}        }_{F}^{2}             }{(k+1)^{2}     }
    \end{align}
    holds with probability at least $1 - 3/(n+L)$.
\end{theorem}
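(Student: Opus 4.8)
The plan is to recognize Steps~2--12 of Algorithm~\ref{algorithm} as an accelerated proximal-gradient (FISTA) scheme applied to the composite objective $\mathcal{L}_{\tau}(\bZ)=f(\bZ)+h(\bZ)$, with smooth data-fidelity term $f(\bZ)=(NL)^{-1}\wh{\ell}_{w}(\bZ)$ and nonsmooth penalty $h(\bZ)=\tau\norm{[\bX,\bZ]}_{\ast}$, and then to run the standard $\mathcal{O}(k^{-2})$ analysis of \citet{beck2009fast,beck2017first}. The first step is to verify the two structural hypotheses of that analysis. Convexity of $f$ holds because each cumulant function $g^{(s)}$ is convex and the weights $r_{ij}^{(s)}/(\pi_{i}\wh{p}_{ij}^{(s)})$ are nonnegative, so $f$ is a nonnegative combination of convex functions, while $h$ is convex as a norm composed with an affine map. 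For smoothness, the Hessian of $\wh{\ell}_{w}$ is diagonal with entries $\pi_{i}^{-1}(r_{ij}^{(s)}/\wh{p}_{ij}^{(s)})(g^{(s)})''(z_{ij}^{(s)})$, so Assumption~\ref{bound of second derivative of g} ($(g^{(s)})''\le U_{\beta}$), Assumption~\ref{bound of survey} ($(N\pi_{i})^{-1}\le\alpha_{U}/n_{h}$), and a lower bound $\wh{p}_{ij}^{(s)}\ge p_{\rm min}/2$ valid on the high-probability event of Theorem~\ref{statistical guarantee} together give that $\nabla f$ is Lipschitz with a constant $L_{f}$ of order $\alpha_{U}U_{\beta}/(p_{\rm min}\,L\,n_{h})$. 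Comparing this with the components of $\wt{\Delta}$ shows $L_{f}\le C_{3}\wt{\Delta}\le\tau$, so the step size $\tau^{-1}$ used in Steps~5--6 is admissible; this is precisely the event on which the $1-3/(n+L)$ probability is spent, inherited from the concentration of $\wh{\bP}^{\dagger}$ already established for Theorem~\ref{statistical guarantee}.

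Next I would match the iterates to the FISTA template. The proximal map of $h$ with step $\tau^{-1}$ is realized by the singular-value thresholding in Steps~6--7: applying $\mbox{SVT}_{\tau}$ to $[\bX,\bT]$ and extracting the last $L$ columns returns the proximal point of $\bZ\mapsto\tau\norm{[\bX,\bZ]}_{\ast}$. The momentum weights $\theta_{k}=2/(k+1)$ correspond to the classical FISTA sequence $t_{k}=(k+1)/2$, which satisfies $t_{k}^{2}-t_{k}\le t_{k-1}^{2}$, and the extrapolation $\bZ_{2}^{(k)}=\bZ_{1}^{(k-1)}+\theta_{k}^{-1}(\wt{\bZ}_{1}^{(k)}-\bZ_{1}^{(k-1)})$ reproduces the FISTA momentum update. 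With $L_{f}\le\tau$ the basic proximal-gradient inequality of \citet{beck2009fast} holds: for the proximal-gradient map $p_{\tau}(\bQ)$ and every matrix $\bZ$,
\begin{align*}
\mathcal{L}_{\tau}(\bZ)-\mathcal{L}_{\tau}(p_{\tau}(\bQ))\ge\frac{\tau}{2}\Norm{p_{\tau}(\bQ)-\bQ}_{F}^{2}+\tau\,\Inner{\bQ-\bZ}{p_{\tau}(\bQ)-\bQ}.
\end{align*}

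The final step is the Lyapunov telescoping. Taking $\bZ^{\ast}$ as the reference point, I would apply the displayed inequality twice per iteration, once at $\bZ=\bZ_{1}^{(k-1)}$ and once at $\bZ=\bZ^{\ast}$, combine with weights $t_{k}-1$ and $1$, and set $v_{k}=\mathcal{L}_{\tau}(\bZ_{1}^{(k)})-\mathcal{L}_{\tau}(\bZ^{\ast})$ and $\bu_{k}=t_{k}\bZ_{2}^{(k)}-(t_{k}-1)\bZ_{1}^{(k)}-\bZ^{\ast}$. The algebra yields that $2\tau^{-1}t_{k}^{2}v_{k}+\Norm{\bu_{k}}_{F}^{2}$ is nonincreasing in $k$, so summing and using $t_{k}=(k+1)/2$ gives $v_{k}\le\tau\Norm{\bZ_{1}^{(0)}-\bZ^{\ast}}_{F}^{2}/(k+1)^{2}$, which is the claimed one-sided bound; the reverse direction, and hence the absolute value, follows because the monotone safeguard in Steps~8--11 keeps $\mathcal{L}_{\tau}(\bZ_{1}^{(k)})$ above its value at the minimizer of $\mathcal{L}_{\tau}$, whose gap to $\mathcal{L}_{\tau}(\bZ^{\ast})$ is controlled at the same order by Theorem~\ref{statistical guarantee}. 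I expect the main obstacle to be the concatenation: one must show rigorously that the SVT-and-extract operation of Steps~6--7 is exactly (or closely enough) the proximal operator of $\bZ\mapsto\tau\norm{[\bX,\bZ]}_{\ast}$ on the $\bZ$-block with $\bX$ held fixed, since otherwise the step above ceases to be an exact proximal step and the clean FISTA recursion breaks. A secondary difficulty is making the Lipschitz bound uniform over the iterates, which requires confining every $z_{ij}^{(s)}$ produced by the algorithm to the extended interval $\mathcal{D}$ of Assumption~\ref{bound of second derivative of g}, so that $(g^{(s)})''\le U_{\beta}$ applies throughout the run.
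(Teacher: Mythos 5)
Your overall route is the same as the paper's: the paper also reads Steps 2--12 of Algorithm~\ref{algorithm} as an accelerated proximal-gradient scheme for the composite objective $\mathcal{L}_{\tau}$ and delegates the $\mathcal{O}(k^{-2})$ recursion wholesale to Theorem 4.1 of \citet{ji2009accelerated}. Its written proof contains only two ingredients: convexity of $h(\bZ)=\Norm{[\bX,\bZ]}_{\ast}$ and of $\wh{\ell}_{w}$ (via $(g^{(s)})''>0$), and an explicit computation of the subdifferential $\partial h(\bZ_{0})$, obtained by right-multiplying the subdifferential of the nuclear norm at $[\bX,\bZ_{0}]$ by $\tilde{\bI}=[\bzero_{n\times d},\bI_{n\times L}]\trans$. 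Everything else you supply --- the diagonal-Hessian Lipschitz bound on $\nabla f$ from Assumptions 1, 2 and 5, the admissibility $L_{f}\le\tau$ as an interpretation of the condition $\tau\ge C_{3}\wt{\Delta}$ and of the probability statement, the identification $\theta_{k}=2/(k+1)\leftrightarrow t_{k}=(k+1)/2$, and the Lyapunov telescoping --- is absent from the paper and implicitly outsourced to the citation. The two obstacles you flag are real and are exactly what the paper leaves open: it never shows that Steps 6--7 (SVT applied to $[\bX,\bT]$ followed by extraction of the last $L$ columns) realizes the exact proximal map of $\bZ\mapsto\tau\Norm{[\bX,\bZ]}_{\ast}$ with the $\bX$-block held fixed --- the subgradient computation is the closest it comes, presumably to let Ji--Ye's argument be adapted --- and it never confines the iterates to $\mathcal{D}$ so that the bound $(g^{(s)})''\le U_{\beta}$ applies along the trajectory (for the Poisson block $g''$ is globally unbounded, so this is not cosmetic). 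In these respects your sketch is more honest and more complete than the published proof.

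The one step in your proposal that genuinely fails is the absolute-value patch at the end. The FISTA telescoping with an arbitrary reference point yields only the one-sided bound $\mathcal{L}_{\tau}(\bZ_{1}^{(k)})-\mathcal{L}_{\tau}(\bZ^{\ast})\le\tau\Norm{\bZ_{1}^{(0)}-\bZ^{\ast}}_{F}^{2}/(k+1)^{2}$. For the reverse direction you can only say $\mathcal{L}_{\tau}(\bZ^{\ast})-\mathcal{L}_{\tau}(\bZ_{1}^{(k)})\le\mathcal{L}_{\tau}(\bZ^{\ast})-\mathcal{L}_{\tau}(\wh{\bZ})$, and this optimality gap is a fixed statistical quantity that does not decay in $k$: Theorem~\ref{statistical guarantee} controls $\Norm{\wh{\bZ}-\bZ^{\ast}}_{F}^{2}$, not the objective gap, and in any case nothing there produces a $(k+1)^{-2}$ rate. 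So for large $k$ the two-sided inequality as you argue it can hold only if $\mathcal{L}_{\tau}(\bZ^{\ast})=\mathcal{L}_{\tau}(\wh{\bZ})$; the statement is evidently meant with the minimizer $\wh{\bZ}$ as reference point, as in Ji--Ye, and the paper's proof is silent on this point as well. Separately, your claim $L_{f}\le C_{3}\wt{\Delta}$ is plausible in rate (after the $(NL)^{-1}$ scaling, $L_{f}=\mathcal{O}(\alpha_{U}U_{\beta}/(p_{\rm min}n_{h}L))$ is dominated by the third component of $\wt{\Delta}$) but is asserted, not verified; since the paper offers no counterpart, treat it as a conjecture needing a constant-level check rather than a fact inherited from Theorem~\ref{statistical guarantee}.
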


\begin{proof}
The proof of Theorem 2 is similar to the proof of Theorem 4.1 in \cite{ji2009accelerated}. We first claim that $h$ is convex. For $\alpha \in(0,1)$, $\bC, \bD  \in  \mathbb{R}^{n \times L}$, we have
\begin{align*}
 h( \alpha \bC + (1-\alpha) \bD  ) &= 
 \Norm{ [\bX, \alpha \bC + (1-\alpha) \bD  ]   }_{\ast}\notag\\
 &= 
 \Norm{ \alpha[\bX,  \bC   ]  + (1 - \alpha)[\bX,  \bD   ]  }_{\ast}\notag\\
 &\leq \alpha  \Norm{ [\bX, \bC ]   }_{\ast} + (1-\alpha)  \Norm{ [\bX, \bD ]   }_{\ast}\notag\\
 &= \alpha  h(\bC) + (1-\alpha) h(\bD).
\end{align*}
We conclude that $\wh{\ell}_{w}\left(\bZ\right)$ is convex as  $(g^{(s)})''(z) > 0$  from Assumption~\ref{bound of exponential family}, so $\mathcal{L}_{\tau} (\bZ)$ is convex.

Secondly, we verify the subgradient form of $h(\bZ) =  \Norm{ [\bX, \bZ ]   }_{\ast}$. The subgradient $\bG$ for $h$ at $\bZ = \bZ_{0}$ indicates that, for any $\bZ \in  \mathbb{R}^{n \times L}$,
\begin{align}\label{orginal subgradient}
 h(\bZ) \geq h(\bZ_{0}) + \langle \bZ - \bZ_{0}, \bG \rangle,
\end{align}
where $\langle \cdot, \cdot \rangle$ is induced by trace norm. Expand \eqref{orginal subgradient}, it yields$
 \Norm{ [\bX, \bZ ]   }_{\ast} -  \Norm{ [\bX, \bZ_{0} ]   }_{\ast}
 \geq  \langle \bZ - \bZ_{0}, \bG \rangle$.
On the other hand, for any $\tilde{\bG} \in \partial\Norm{[\bX, \bZ_{0}]}_{\ast}$, that is, the subgradient for nuclear norm of the $n \times (D + L)$ matrix at $[\bX, \bZ_{0}]$, we have
\begin{align*}
 \Norm{ [\bX, \bZ ]   }_{\ast} -  \Norm{ [\bX, \bZ_{0} ]   }_{\ast} \geq  \langle [\bX, \bZ] - [\bX, \bZ_{0}], \tilde{\bG} \rangle 
 = \langle [\bzero, \bZ - \bZ_{0}] , [\tilde{\bG}_{1}, \tilde{\bG}_{2}] \rangle
= \langle \bZ - \bZ_{0}, \tilde{\bG}_{2} \rangle. 
\end{align*}
We know that the subgradient 
\begin{align*}
\partial\Norm{[\bX, \bZ_{0}]}_{\ast} = &\{ \bU \bV^{\trans} + \bW: \bW \in \mathbb{R}^{n \times (d + L)}, \bU\trans\bW = \bzero, \bW\bV = \bzero, \Norm{\bW} \leq 1      \},
\end{align*}
where $[\bX, \bZ_{0}] = \bU\bSigma\bV$ is an SVD. Therefore, let 
 $\tilde{\bI} = [\bzero_{n\times d}, \bI_{n \times L}]\trans$, 
it turns out that
\begin{align*}
\partial h(\bZ_{0}) = &\{ (\bU \bV^{\trans} + \bW)\tilde{\bI} : \bW \in \mathbb{R}^{n \times (d + L)}, \bU\trans\bW = \bzero,
\bW\bV = \bzero, \Norm{\bW} \leq 1      \},
\end{align*}
where the subgradient of $h$ is verified.
\end{proof}

\section{Synthetic Experiment}

We conduct a simulation study under a stratified two-stage cluster sampling design. Specifically, we assume a finite population consists of $H$ strata,   
 $M_h$ clusters in the $h$-th stratum and $M_{hi}$ elements in the $(hi)$-th cluster.  
Assume stratum effects $ a_h \sim \mathrm{Ex}(1)$ for $h=1,\ldots,H$, where $\mathrm{Ex}(\lambda)$ is an exponential distribution with rate parameter $\lambda$. 
   Generate the stratum sizes $
   M_h\mid a_h \sim 5\mathrm{Po}(a_h) +20$ for $h=1,\ldots,H$, where $\mathrm{Po}(\lambda)$ is a Poisson distribution with parameter $\lambda$.
   Further, generate  cluster effects $b_{hi}\sim\mathrm{Ex}(1)$ for $i=1,\ldots,M_{h}$ and the cluster sizes $
   M_{hi} \sim 5\mathrm{Po}(a_h+b_{hi}) +30$ for $i=1,\ldots,M_h$.
   We consider the following sampling design.
     The first-stage sampling design is probability proportional to size sampling with replacement, where the selection probability is proportional to the cluster size, and the sample size is $m_1$. That is, for $h=1,\ldots,H$, independently sample $m_1$ clusters with selection probability proportion to $M_{hi}$ for $i=1,\ldots,M_h$. 
     The second-stage sampling is simple random sampling without replacement, and the sample size is $m_2$. Then, for $h=1,\ldots,H$, the inclusion probability for each sampled element is $\pi_{hij} = m_1m_2/N_h$ for $i=1,\ldots,M_{h}$ and $j=1,\ldots,M_{hi}$, where $N_h = \sum_{i=1}^{M_h}M_{hi}$ is the stratum size.
     We consider $(m_1,m_2) = (5,20)$ for each setup of $H$.
   The above sampling mechanism results in sample size $n = 900$.

   Within each cluster, we generate an auxiliary matrix and a response matrix of interest.
   Specifically, generate $\bX_{\ast,hi}^{0} \in \mathbb{R}^{M_{hi} \times D}$ whose entries are independent and identically distributed from Ex(1).
   Let $\bX_{\ast,hi} = \bX_{\ast,hi}^{0} / \norm{ \bX_{\ast,hi}^{0}  }_{\infty}$. 
   Further, generate coefficient matrix $\bW_{hi}^{(s)} \in \mathbb{R}^{d \times m_{s}}$ whose elements are independently generated from a uniform distribution over $(0,2)$ for $s = 1, 2, 3$.  Let $\wt{\bZ}_{\ast,hi}^{(s)} = \bX_{\ast,hi}\bW_{hi}^{(s)}$ and $\bZ_{\ast,hi}^{(s)} = \wt{\bZ}_{\ast,hi}^{(s)} / \norm{ \wt{\bZ}_{\ast,hi}^{(s)}  }_{\infty}$. 
   Each $\bY_{hi}^{(s)}$ are generated by parameters in $\bZ_{\ast,hi}^{(s)}$.
   Specifically, we assume that the entries of $\bY_{hi}^{(1)}$ come from the Gaussian distribution, the entries of $\bY_{hi}^{(2)}$ come from the Poisson distribution and the entries of $\bY_{hi}^{(3)}$ come from the Bernoulli distribution. 
\begin{figure}[htb]
   \begin{minipage}[b]{0.48\linewidth}
     \centering
     \centerline{\includegraphics[width=7.0cm]{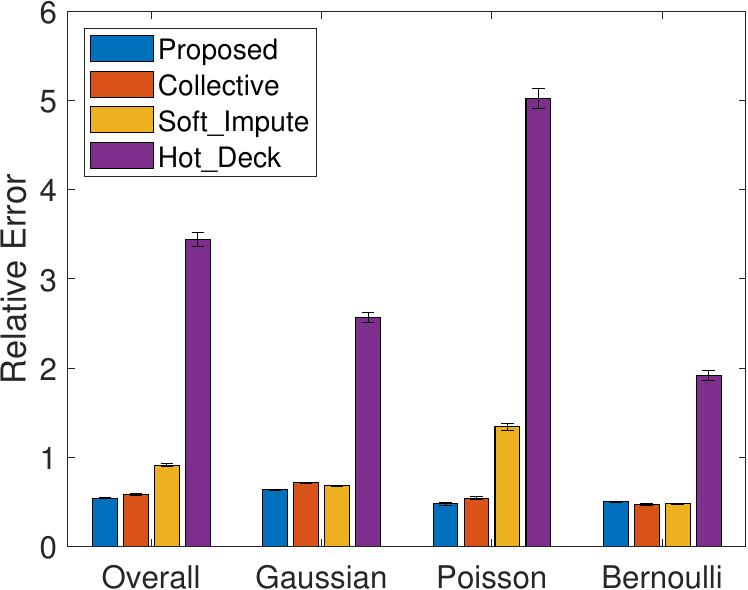}}
     \centerline{(a) response probability 40\%}\medskip
   \end{minipage}
   \hfill
   \begin{minipage}[b]{.48\linewidth}
     \centering
     \centerline{\includegraphics[width=7.0cm]{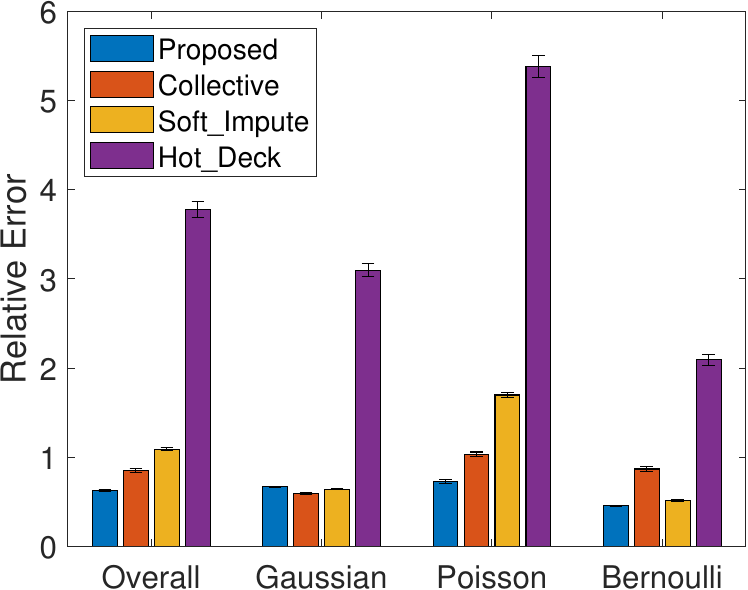}}
     \centerline{(b) response probability  60\%}\medskip
   \end{minipage}

   \caption{ Relative errors for each method with 100 Monte Carlo samples in three response probability scenarios (including hot deck). }
   \label{fig:main}
   \end{figure}

     For entry-wise missing mechanism, let $\bzeta_{j,h}^{(s)} = ( \zeta_{1,j,h}^{(s)} , $  $\zeta_{2,j,h}^{(s)}, \zeta_{3,j,h}^{(s)}, \zeta_{4,j,h}^{(s)}    )\trans$. 
   We generate $\zeta_{1,j,h}^{(s)}$ from $N(\xi, 0.1^{2})$ and $\zeta_{k,j,h}^{(s)}$ from $N(0.3, 0.1^{2})$ for $k = 2, 3, 4$,
    where $\xi$ is used to adjust the overall missing rate. In this simulation study, we take $\xi = 0.3, -0.1, -0.5$ for response probability $40\%$ and $60\%$. 
   In addition, take $m_{1} = m_{2} = m_{3} = 300$. We further choose learning depth $K = 200$. We tune the parameter $\tau$ using an independently generated validation dataset and apply the same parameter to the $100$ Monte Carlo samples. We compare the proposed method with three other popular 
   matrix completion methods:
   \begin{itemize}
      \item \textbf{Collective}: penalized likelihood accelerated inexact soft impute method from collective matrix completion \cite{alaya2019collective}.
      \item \textbf{Soft\_Impute} : Soft-Impute method \cite{mazumder2010spectral}. 
      \item \textbf{Hot\_Deck} : Hot deck imputation method \cite{kim2004fractional}. 
   \end{itemize} 
    To measure the performance of each method, we employ the   relative error of recovered matrix $\wh{\bS}$ compared with true matrix $\bS^{\ast}$, i.e., 
$
      \mbox{RE}(\wh{\bS}, \bS^{\ast}) =  \norm{\wh{\bS} - \bS^{\ast}}_{F}  / \norm{\bS^{\ast}}_{F}.
$
 
 In the simulation, we compare different methods in the aspects of RE($\wh{\bZ}$, $\bZ^{\ast}$) for the overall recovered matrix, RE($\wh{\bZ^{(1)}}$, $\bZ^{\ast(1)}$) in Gaussian response part,  RE($\wh{\bZ^{(2)}}$ $\bZ^{\ast(2)}$) in Poisson response part and 
 RE($\wh{\bZ^{(3)}}$, $\bZ^{\ast(3)}$) in Bernoulli response part. The results are presented in Figure~\ref{fig:main}, where the error bar on top of each colored bar represents one standard error for the corresponding relative error.
 The proposed method has the lowest overall relative error in all three missing scenarios.  More specifically, in 
 each sub-matrix, our proposed method also presents nearly lowest relative errors in RE($\wh{\bZ^{(1)}}$, $\bZ^{\ast(1)}$), RE($\wh{\bZ^{(2)}}$, $\bZ^{\ast(2)}$) and RE($\wh{\bZ^{(3)}}$, $\bZ^{\ast(3)}$). 
 The numerical results partially support the benefit of auxiliary information for heterogeneous missingness for better matrix completion under survey sampling framework.

\section{Real Data Application}

The National Health and Nutrition Examination Survey (NHANES) is a comprehensive national survey conducted every two years to provide representative data on the health and nutritional status of adults and children in the United States. NHANES employs a multi-faceted approach to data collection, involving interviews, physical examinations, and laboratory tests. This survey covers a wide range of health-related topics, including demographic information, health conditions, nutrition status, and health behaviors, among others.
The goal is to recover the values in the 2015-2016 NHANES data.

The dataset comprises $n = 5375$ sampled units, each associated with corresponding sampling weights.  The number of strata $H$ in our real data application is 14; see \cite{chen2020national} for more details.  The covariate information includes 16 fully observed demographic variables, such as gender, age and marital status. To create the covariate matrix, we standardize these variables, ensuring that their columns have zero mean and unit variance.
The response matrix consists of answers to 130 survey questions, and a majority of them have missing values. Among these questions, 57 are binary answers and 73 are continuous-valued. We assume that the binary variables follow Bernoulli distributions and the continuous variables follow Gaussian distributions. 
Following the convention  of  \cite{alaya2019collective} and \cite{robin2020main}, we standardize the columns for continuous responses. We revert to the original scale to present the final results.
 We apply the four methods assessed in the previous section to this dataset. To select the appropriate tuning parameter, we adopt a five-fold cross-validation approach. The dataset is randomly divided into five folds, and within each fold, we employ the training set to complete the matrix and calculate the squared Euclidean norm for the difference between the imputed mean and the observed entries in the validation set. The sum of these errors serves as the cross-validation criterion for a specific tuning parameter. We choose the best tuning parameter with the lowest cross-validation error from  $\{2^{-15},\ldots, 2^{-1}, 1, 2\}$.

\begin{table}[ht]
\caption{Mean estimation for the following six questions, where ``I" stands for ``What is the highest grade or level of school {you have/SP has} completed or the highest degree {you have/s/he has} received?",
``II" for ``Total household income (reported as a range value in dollars)",                                                                              
``III" for ``How many of those meals {did you/did SP} get from a fast-food or pizza place?",                                                                      
``IV"  for ``Total savings or cash assets at this time for {you/NAMES OF OTHER FAMILY/your family}.",                                                         
``V"  for ``Because of {your/SP's} (high blood pressure/hypertension), {have you/has s/he} ever been told to $\ldots$ take prescribed medicine?",                   
``VI"  for ``During the past 30 days, how many times did {you/SP} drink {DISPLAY NUMBER} or more drinks of any kind of alcohol in about two hours?".}
\label{tab:real data}
\centering
\begin{tabular}{cccccc}
  \hline
  & Response rate & Collective & Soft\_Impute & Hot\_Deck & Proposed \\ 
  \hline
I    & 0.95 & 3.75 & 3.75 & 3.76 & 3.74 \\ 
II   & 0.93 & 9.88 & 9.89 & 9.91 & 9.89 \\
III  & 0.77 & 1.86 & 1.98 & 2.08 & 1.99 \\ 
IV   & 0.66 & 1.57 & 1.37 & 1.54 & 1.30 \\ 
V    & 0.35 & 0.80 & 0.52 & 0.80 & 0.80 \\ 
VI   & 0.23 & 0.56 & 0.56 & 0.90 & 0.55 \\ 
   \hline
\end{tabular}
\end{table}

The weighted mean for each question is computed using the imputed matrix and the survey weights. To present the results, we have randomly selected six questions, categorized by different levels of response probabilities. Specifically, we have two questions with high response rates of 0.95 and 0.93, two questions with moderate response rates of 0.77 and 0.66, and two questions with low response rates of 0.35 and 0.23. The corresponding results along with descriptions of the questions are provided in Table~\ref{tab:real data}.
Notably, when the response rates are high, the results obtained from the four methods are relatively similar. However, as the response rates decrease, the four methods begin to diverge from each other. It is important to observe that in all scenarios, the results from our proposed method consistently agree with one of the other three methods, demonstrating its robustness and practical use.

\section{Conclusion}

In this work, we study mixed matrix completion in survey sampling with heterogeneous missingness.
Statistical guarantees of the proposed method and the upper bound of the estimation error are presented. We propose an algorithm for estimation, and its convergence analysis shows that it achieves a sub-linear convergence.
Experimental results support our theoretical claims and the proposed estimator shows its merits compared to other existing methods. Some research directions are worthy of investigation. To accommodate longitudinal observations, extension to the tensor completion of the proposed method is a potential research topic. To improve the numerical performance, a non-convex factorization \citep{zhao2015nonconvex} method for $\bZ^{\dagger}$ in \eqref{main object} can be considered.

\section*{Supplementary Materials}

Supplementary materials contain a demo Matlab code for our proposed method and useful lemmas for Theorem~\ref{statistical guarantee}.

\bibliographystyle{chicago}
\bibliography{Survey_Mix_MC_refine}




\end{document}